\newcommand{\shorteqnote}[1]{& & & \text{\smaller\llap{#1}}}
\newtheorem{theorem}{Theorem}[]
\newtheorem{proposition}{Proposition}[]
\newtheorem{corollary}{Corollary}[theorem]
\newtheorem{corollary_p}{Corollary}[proposition]
\providecommand{\keywords}[1]
{
  \small	
  \textbf{\textit{Keywords---}} #1
}
\let\Item\item
\newenvironment{alphlist}{%
		\let\item\Item
  \begin{enumerate}%
}{%
  \end{enumerate}}
\newenvironment{romanlist}{%

	\let\item\Item
	\begin{enumerate}
	}{%
	\end{enumerate}}
\author[1]{P\'eter~Boldog}
\affil[1]{
Bolyai Institute, University of Szeged\par 
Aradi v\'ertan\'uk tere 1.\par
Szeged, H-6720, Hungary\par
boldogpeter@gmail.com}
\title{Exact lattice-based stochastic cell culture simulation algorithms
incorporating spontaneous and contact-dependent reactions} 
\begin{document}

\maketitle

\begin{abstract}
In this paper, we address the modeling issues of cell movement and division with a special focus on the phenomenon of volume exclusion in a lattice-based, exact stochastic simulation framework.
We propose a new exact method, called Reduced Rate Method -- RRM, that is substantially quicker than the previously used exclusion method, for large number of cells.
In addition, we introduce three novel reaction types: the contact-inhibited, the contact-promoted, and the spontaneous reactions. To the best of our knowledge, these reaction types have not been taken into account in lattice-based stochastic simulations of cell cultures. These new types of events may be easily applied to complicated systems, enabling the generation of biologically feasible stochastic cell culture simulations.
Furthermore, we show that the exclusion algorithm and our RRM algorithm are mathematically equivalent in the sense that the next reaction to be realized and the corresponding sojourn time both belong to the same reaction and time distributions in the two approaches -- even with the newly introduced reaction types.

Exact, agent-based, stochastic methods of cell culture simulations seem to be undervalued and are mostly used as benchmarking tools to validate deterministic approximations of the corresponding stochastic models.
Our proposed methods are exact, they are easy to implement, have a high predictive value, and can be conveniently extended with new features. Therefore, these approaches promise a great potential.
\end{abstract}

\keywords{stochastic simulation, cell cultures, lattice-based models, cell migration, cell proliferation, volume exclusion, agent based model}

\section{Introduction}
With the emergence of personalized therapeutic approaches, computational drug testing, and even artificial tissue engineering modeling of cell cultures have become a prominent area of mathematical and \textit{in silico} biology. 
Understanding the complex collective behavior emerging from primitive phenomena, such as migration, proliferation, or intercellular communication of individual cells is essential to be productive in these fields. 
Researchers use a wide variety of mathematical and numerical tools to describe the behavior of populations.

Parunak et al. \cite{ParunakABM} distinguishes between equation-based models and the agent-based models. Equation-based models are typically composed of deterministic or stochastic approaches and provide well-established methods for both analytical and numerical treatments of the problem. For example, to determine the evolution of a population over time or to predict certain events. They are also used to prove or rule out specific behaviors of the system: such as extinction of a population, chaotic phenomena, or the absence of periodic behavior, for instance. 

Deterministic models consist of systems of ordinary, partial, or delayed differential equations and are used with great success to predict and analyze the behavior of real-world systems \cite{AllenMathBioBook}\cite{BookKuttlerMuller}. These models usually treat the state variables, as well as the actual individuals in the population model, as if they were continuous quantities (densities or concentrations), although, populations or even chemical solutions are composed of discrete individuals (particles). Moreover, for these models to be predictive, one has to assume that the size of the population is large or apply corrections into the model. We refer to the infamous atto-fox problem and Fowler's article \cite{Fowler} for a detailed discussion and feasible solutions. The time evolution of the model can be obtained with various numerical methods.

When it is important to work with discrete individuals or when the number of individuals is small, the theory of stochastic processes may provide a feasible alternative, especially when stochastic effects play an important role in the evolution of the system. These are mostly Markovian stochastic processes with a discrete state space in discrete or continuous time \cite{AllenStochaBook}\cite{BookKuttlerMuller}\cite{Renshaw}. Analytical treatment of complex stochastic models is usually not trivial, the gold standard among the numerical approaches is Gillespie's stochastic simulation algorithm (SSA) \cite{Gillespie-1}\cite{Gillespie-2} and its variants \cite{CaoEfficient}\cite{TauLeap-1}. We shall adapt this algorithm to the simulation of cell cultures in this article.

Agent-based models (ABMs) or individual-based models \cite{RadekErbanIBM}\cite{ParunakABM}\cite{WilenskyRandABM} are feasible alternatives to the equation based models. ABMs, unlike continuum models, regard every particle as an individual that follows a prescribed set of rules. Information about the system can be obtained by analyzing the collective behavior of the agents with statistical methods.
This technique offers a broader description of individual behavior and accounts for stochastic effects caused by the finite number of agents and their interactions.
As Bonabeau \cite{BonabeauABM} points out: "Individual behavior exhibits memory, path-dependence, and hysteresis, non-markovian behavior, or temporal correlations, including learning and adaptation".
However, the rigorous mathematical analyzis is usually difficult, and simulations can be resource-intensive. ABMs are subject for existential-type proofs, to prove the existence of a certain behavior of the population, even in cases when the formulation of a corresponding equation-based model is not evident \cite{WilenskyRandABM}. Equation-based models have a well-developed mathematical theory, yet agent-based models are mostly expressed as computer routines.

In the case of equation-based population models, we usually decrease the complexity of the investigated phenomena by applying a classification on the individuals according to the trait we intend to examine -- assuming that the differences in this attribute can be averaged out within a class or can be accounted for with a corresponding distribution. 
We usually aim to keep the number of classes, and, hence, the number of parameters required as low as possible. Meanwhile, the model remains of a manageable size, yet it still captures the essence of the modeled phenomena.
In contrast, ABMs are mainly used when the number of these classes is close to the number of members in the population, i. e., the classifying traits are parameter-specific to the individuals and highly influence their behavior. 
Therefore, in these complex cases, we may use ABMs and focus on simulations.

Cells are mesoscopic entities in multicellular organisms. Depending on the scale at which their population is examined and on the complexity of the investigated phenomena, one may find an appropriate approach among the modeling paradigms described above. However, a typical approach is to approximate the behavior of the population with a continuum mean field model \cite{dyson2012macroscopic}, such as a variant of the logistic equation or the Fisher–Kolmogorov equation \cite{DelayDin}\cite{ECMI}, incorporate corrections if spatial effects play an important role \cite{CorrBDM}\cite{IncSpacCorrMeanF}, and analyze the obtained deterministic model with rigorous analytical means. Thus, the underlying ABM play the role of a benchmarking tool. We would like to acknowledge the early, but still significant review by Fredrickson et al. \cite{FredStat} and the excellent review by Charlebois and Balázsi \cite{ModCellPopDyn}.

The development of cell cultures is influenced by a number of important phenomena. 
The two most studied ones may well be cell movement and cell division. 
These two phenomena play notable roles in several important processes, including wound healing, tissue development, or even cancer propagation (Maini and Baker \cite{CollCell}; or for a more experimental viewpoint, see Méhes and Vicsek \cite{MehesVicsek}).
In most cases, approximating cells as point-like particles may not be sufficient, since spatial effects usually affect the development of the cell population \cite{CorrBDM}. 
Volume exclusion is such a significant spatial effect, which expresses that two cells cannot occupy the same volume in space, thus cells are physical obstacles to each other.
Technically, the 'excluded volume' of a cell is the space that is unavailable to other cells in the population as a result of the presence of the first cell.

Probably the most convenient way to incorporate volume exclusion into cell culture models is discretizing the space by an underlying lattice.
Even though there are several well known on-lattice methods: cellular automaton models, cellular potts models, or cellular lattice gas models, we improved a variant of Gillespie's SSA that is extended to work in spatially inhomogenous environment and incorporates volume exclusion.
\\\\
In this work, we propose an extended version of the algorithm originally defined by Baker et al. \cite{CorrBDM}.
We call this the Prompt Decision Method (PDM).
This method introduces new reactions to the modeling framework: contact-inhibited, contact-promoted, and spontaneous reactions, and also broadens the idea of volume exclusion.
Furthermore, we also propose a novel, faster, and computationally more efficient method, the Reduced Rate Method (RRM), that, as we intend to prove, is equivalent to the PDM algorithm. 
Finally, we also provide a more convenient, but also equivalent version of the RRM algorithm, the marginal Reduced Rate Method (mRRM).

Reactions are random events that can occur with a given probability during the life of the simulated cells according to the considerations of the model.
The rate of contact-inhibited reactions of a cell decreases as the number of its free neighbors increases.
Therefore, migration and proliferation naturally fall into this reaction type, and further physiological processes that are inhibited in a crowded environment may also be accounted for in this category.
In a similar manner, the rate of contact-promoted reactions of a cell increases as the number of occupied neighbors increase. On the other hand, the rate of spontaneous reactions is not affected by the number of occupied neighboring lattice sites. We shall explain these concepts in details in \ref{sec:reactions}.

In the remaining part of the Introduction, we summarize Gillespie's SSA for well-stirred chemical systems, then we shortly describe the modified version of the SSA used to simulate the dynamics of moving and proliferating cells on a lattice, proposed by Baker et al. \cite{IncSpacCorrMeanF}. We also point out the limitations of this algorithm in the case of a crowded environment.

\subsection{Gillespie's stochastic simulation algorithms}\label{sec:gil-method}
Gillespie \cite{Gillespie-1}\cite{Gillespie-2} formulated the SSA in order to produce computer-based numerical experiments of chemical reactions. 
It is assumed that chemical species or reactants ($S_1,S_2,\ldots,S_m$) interact via reaction channels ($R_1, R_2,\ldots,R_M$) in a well-stirred reaction chamber.
Time evolution of the amount of reactants ($\underbar{S}(t)=(S_1(t),S_2(t),\ldots,S_m(t))$) is described by a Markov process that is continuous in time ($t\in[0,\infty)$) and discrete in state space ($\underbar{S}(t)\in\mathbb{N}_0^m$). The SSA obtains stochastic realizations of the system by answering two questions \cite{Gillespie-1}: 
\begin{romanlist}
    \item In what time ($\tau$) will the next reaction occur?\label{Q1}
	\item What kind of reaction ($R_1, R_2,\ldots,R_M$) will the next reaction be?\label{Q2}
\end{romanlist}

To answer these questions, Gillespie developed two separate but mathematically equivalent variants of the SSA: the 'direct method' and the 'first-reaction method'.
The key distinction is in the method the algorithms determine which reaction channel to activate. 
It is clear from the formulation \cite{Gillespie-1} that the computational cost of the \textit{direct method} is almost always smaller than the cost of the \textit{first-reaction method}, thus we only review and make use of the \textit{direct method} in this work.
For a comparison of the variants of the SSA, refer to Gillespie's original article \cite{Gillespie-1} and to the work by Cao et al. \cite{CaoEfficient}.

\subsubsection*{The direct method algorithm}
Starting from the vector of initial values $(S_1(0),\dots,S_m(0))\in\mathbb{N}_0^m$, the time series of the state variables $\underbar{S}(t)$ is generated by constantly updating the state variables in properly generated subsequent times ($\tau$) according to the answers to question (\ref{Q1}) and (\ref{Q2}).
To this end, the \textit{reaction probability density function}, $P(\tau, \mu)d\tau$, is defined \cite{Gillespie-1}. That is, at time $t$, the next reaction in the reaction chamber will occur in the differential time interval $(t + \tau, t + \tau + d \tau)$ with the probability $P(\tau, \mu)d\tau$, and will be an $R_\mu$ reaction ($\mu\in {1,\ldots,M}$).

With the procedure of conditioning \cite{Gillespie-1}, the two-variable density function $P(\tau,\mu)$ can be written as the product of two one-variable probability density functions:
\begin{equation}\label{eq:cond}
	P(\tau,\mu) = P_1(\tau) \cdot P_2(\mu|\tau).
\end{equation}
In particular, from \cite{Gillespie-1}, it turns out that $$P(\tau,\mu) = a_\mu \cdot e^{-\tau\cdot a},$$ where the symbol $a_\mu$ $(\mu\in\{1,\dots,M\})$ stands for the so-called propensity function that characterizes reaction $R_\mu$, and may depend on the quality and the actual number of the reactants, their number of combinations or environmental factors, etc. 
In \cite{Gillespie-1} it is assumed that $a_\mu=c_\mu h_\mu$, where $c_\mu$ corresponds to the rate constant of the reaction $R_\mu$ and $h_\mu$ is the number of distinct combinations of reactants participating in $R_\mu$, being present in the reaction chamber at time $t$. 
Although $a_\mu$ and $h_\mu$ clearly depend on time via the number of reactants at time $t$, the time dependence is usually not indicated to simplify the notations. 
Variable $a$ stands for the sum of the propensity functions: $a=\sum_{\mu=1}^M a_\mu$.
In Table \ref{t:combinations} we summarized the formula for $h_\mu$ in case of some important reactant combinations. For detailed explanation consult \cite{Gillespie-1}.

\begin{table}[h!]
\centering
\begin{tabular}{|c|c|}
\hline
Reaction type & Reactant combinations\\
\hline\hline
$S_i\rightarrow$ products & $h_\mu=S_i(t)$ \\
\hline
\multirowcell{2}{$S_i+S_k\rightarrow$ products\\ \scriptsize{($i\neq k$)}}&\multirowcell{2}{ $h_\mu=S_i(t)S_k(t)$}\\
&\\
\hline
$2S_i\rightarrow$ products & $h_\mu=S_i(t)(S_i(t)-1)/2$\\
\hline
\multirowcell{2}{$S_i+S_k+S_l\rightarrow$ products\\ \scriptsize{($i\neq k\neq l\neq i$)}} & \multirowcell{2}{$h_\mu=S_i(t)S_k(t)S_l(t)$}\\
&\\
\hline
\multirowcell{2}{$S_i+2S_k\rightarrow$ products\\ \scriptsize {($i\neq k$)}\\} & \multirowcell{2}{$h_\mu=S_i(t)S_k(t)(S_k(t)-1)/2$}\\
&\\
\hline
$3S_i\rightarrow$ products& $h_\mu=S_i(t)(S_i(t)-1)(S_i(t)-2)/6$\\
\hline
\end{tabular}
\caption{\textbf{Common reactant combinations in chemical reactions} 
A subset of chemical species $\{S_1,\dots,S_n\}$ may react in various combinations $h_\mu$ in the reaction chamber.
The first row corresponds to the decay of species $S_i$. The second and third row correspond to the two possible bimolecular reactions: the reactants are different or the same chemical species, respectively. 
Rows 4, 5, and 6 correspond to the possible trimolecular reaction types in which all three molecules are of different species, two are of the same species, and all of the same species, respectively.}
\label{t:combinations}
\end{table}

Let $P_1(\tau)d\tau$ be the probability that the next reaction will occur between times $t+\tau$ and $t+\tau+ d\tau$, independent of which reaction it might be. Similarly, $P_2(\mu|\tau)$ is the probability that the next reaction will be the $R_\mu$ reaction, given that the next reaction occurs at $t+\tau$. The probability $P_1(\tau)d\tau$ is the marginal distribution for $\tau$, and obtained by summing the reaction probability density function over all possible $\mu$ values:
\begin{equation}\label{eq:P1}
	P_1(\tau)=\sum_{\mu=1}^m P(\tau,\mu)=a\cdot e^{-\tau\cdot a}.
\end{equation}
Substituting $P_1(\tau)$ into (\ref{eq:cond}) and solving for $P_2(\mu|\tau)$ we obtain
\begin{equation}\label{eq:P2}
	P_2(\mu|\tau)=\frac{a_\mu}{a}.
\end{equation}
From which it is clear that $\int_0^\infty P_1(\tau)d\tau=\int_0^\infty a\cdot e^{-\tau\cdot a}=1$ and $\sum_{\mu=1}^{M}P_2(\mu|\tau)=\sum_{\mu=1}^{M}a_\mu/a=1.$
Thus, the so-called sojourn time (or waiting time) $\tau$ between consecutive reactions is an exponentially distributed continuous random variable with parameter $a$, the sum of the reaction propensity functions. And the algorithm randomly selects reaction index $\mu$, which is a discrete random variable, from the discrete probability distribution (Eq. \ref{eq:P2}).

One may generate such $\tau$ from Eq. (\ref{eq:P1}) by selecting $r_1\in(0,1)$ with a continuous uniform distribution and calculate $\tau=1/a\ln(1/r_1)$. 
To generate a reaction index $\mu$ from Eq. (\ref{eq:P2}), one has to select $r_2\in(0,1)$ with a continuous uniform distribution and choose $\mu$ to be the index for which $\sum_{\nu=1}^{\mu-1}a_\nu<r_2a\leq \sum_{\nu=1}^{\mu}a_\nu$ holds. 
This is discussed in details in \cite{Gillespie-1}, and most programming languages have highly optimised built-in functions to execute this task.

Upon choosing reaction $R_\mu$ and $\tau$, the algorithm has to obtain the change vector $(\Delta S_1,\dots,\Delta S_m)\in \mathbb{Z}^m$ that describes the change in the amount of reactants according to $R_\mu$. Finally, both time and the state variables have to be updated.
Gillespie's SSA requires a simple data structure to record the time evolution of the system: a list containing the subsequently appended values $(S_1(t),\dots,S_m(t))$ after each iteration.
With these notations and probability distributions, the direct method SSA can be summarised as follows.

{\smaller
\paragraph{Direct Method Stochastic Simulation Algorithm}
\begin{enumerate}
	\item \textbf{Initialization:}
	\begin{alphlist}
	    \item Initialize data structure that stores time series.
		\item set $t\leftarrow0$,
		\item set initial values $(S_1(0),\dots,S_m(0))\in\mathbb{N}_0^m$, 
		\item prescribe halting conditions $H$.
	\end{alphlist}
	\item \textbf{Calculate propensity functions} $a_\mu$ for all $\mu\in\{1,2,\ldots,M\}$, calculate $a$.
	\item \textbf{Decide when the next reaction will occur:} choose $\tau$ according to Eq.~(\ref{eq:P1}),
	\item \textbf{Reaction selection:} choose $\mu$ according to Eq.~(\ref{eq:P2}),
	\item \textbf{Update time and state variables:}
	\begin{alphlist}
		\item set $t\leftarrow t+\tau$
		\item obtain the change $(\Delta S_1,\dots,\Delta S_m)\in \mathbb{Z}^m$ in number of reactants according to $\mu$,
		\item change the number of molecules according to $$(S_1(t+\tau),\dots,S_m(t+\tau))=(S_1(t),\dots,S_m(t))+(\Delta S_1,\dots,\Delta S_m).$$
	    \item Update data structure.
	\end{alphlist}
	\item \textbf{Halt if $H = True$ else continue the process with Step (2).}
\end{enumerate}
}

\subsubsection*{Strengths and limitations of the SSA}\label{strength_lim}
It became apparent that Gillespie's approach is particularly beneficial in case of certain biological systems when only a small quantity of reactants are present in the reaction chamber, thus stochastic fluctuations play an important role in the modeled physiological process. 
A well-known example of a process of this type is intracellular gene regulation \cite{RNSStoch}, where only a few copies of the corresponding regulatory molecules are present in the cell \cite{SmallNumStoch}.
In addition to chemical kinetics \cite{erban_chapman_2020}\cite{Persp}\cite{ReactKin}, the SSA has been proven useful for studying epidemic spread \cite{Allen}\cite{PropMtx} and describing ecological phenomena \cite{GEM}, among a variety of further applications.

Since the SSA produces the time evolution of the system by recording every single reaction, it may be time consuming for systems that are characterized by large state space or contain reactions that have a high rate compared to other reaction rates in the system. 
There have been several attempts to solve these problems, the most influential of these may be the one by Gibson and Bruck \cite{GibsonBruck}. 
Their Next Reaction Method (NRM) is an extension of Gillespie's first reaction method. They introduced the so-called reaction dependency graph. 
The reaction dependency graph specifies which propensity function $a_\nu$ is to be changed after the selected reaction $R_\mu$ is executed. Cao et al. \cite{CaoEfficient} made a detailed comparison of the DM, the FRM, and the NRM, and found that "even with the best data structure, the NRM is less efficient than the DM, except for a very specific class of problems." The NRM handles a complicated data structure and according to Schwehm \cite{Schwehm}: "the simulator engine spends most of its execution time maintaining the priority queue of the tentative reaction times".

Cao et al. (2004) \cite{CaoEfficient} optimized the algorithm by using the clever observation that in case of robust systems, the reactions are usually multiscale, which means that some of the reactions fire more frequently than others. Therefore, by sorting the indices of the reactions in decreasing order based on how often they fire, they were able to speed up the reaction selection step. Moreover they adopted the core idea of the NRM \cite{GibsonBruck} and only updated the propensities of the reaction channels that were affected by the last reaction. The idea of separating fast and slow reactions was further improved in the \textit{slow-scale stochastic simulation algorithm} by Cao et al. (2005) \cite{SlowSSA}.
Finally, we would also like to mention Gillespie's tau-leaping method \cite{TauLeap-1} and its improved versions \cite{Tau-2}\cite{Tau-3}. It is an approximate method analogous to the Euler method in deterministic systems, and it executes several reactions in a time interval before updating the propensities.

\subsection{The lattice-based variant of the SSA}\label{sec:original_LBSSA}

Baker et al. introduced a variant of Gillespie's SSA to simulate a population of cells on a lattice \cite{CorrBDM}. The algorithm incorporates volume exclusion and it was used as a tool to produce numerical experiments to benchmark the mean field approximations of the authors. 

Spherical cells (with uniform size of diameter $d$) are placed on a finite, regular, square lattice of two dimension with grid constant $d$ and von Neumann neighborhood. A cell may engage in three events: movement ($R_1$), division ($R_2$) or death ($R_3$) with the corresponding rates $r_1, r_2, r_3$, in order. In every iteration, the corresponding propensity functions $a_\mu=r_\mu N$, $\mu\in\{1,2,3\}$, $a=a_1+a_2+a_3$ are calculated, where $N=N(t)$ is the number of cells in the population at $t$.

A random waiting time is generated from the distribution $\tau\sim Exp(a)$ and the time of the system is updated: $t\leftarrow t+\tau$. A target cell is randomly selected from the population with discrete uniform distribution. 
Then, the next possible reaction is selected from the distribution $P(\mu|\tau)=a_\mu/a$ and executed on the target cell with the following constraints: 
\begin{itemize}
    \item in the case of movement, the cell moves to a randomly selected neighboring lattice site if it is empty; if the lattice site is occupied, the reaction is rejected,
    \item in the case of division, the cell places a daughter cell on a randomly selected neighboring lattice site if it is empty and the population size is increased by 1; if the lattice site is occupied, the reaction is rejected,
    \item in the case of death, the cell dies without any particular restriction, in this case the lattice site of the cell is turned empty and the size of the population is decreased by 1.
\end{itemize}

Thus, whenever movement or proliferation cannot be executed because of the volume exclusion, we reject the selected reaction. 
However, we always keep the generated sojourn time $\tau$.
We will not present the algorithm in any more detail here, as we will soon introduce an extended version of it.

A major limitation of the approach is that the target cell, that is about to take action in a given iteration of the simulation, is selected randomly from all the cells in the population with a discrete uniform distribution. Thus, there is a chance that the algorithm selects a cell that is unable to move or divide. In fact, with nonzero proliferation rate and zero death rate the population grows monotonically. On a finite lattice the probability of choosing a cell that cannot engage in any of the possible events grows as the cells would have less empty adjacent sites over time.

For an extreme example, suppose that the grid is finite of size $K=n_1\cdot n_2$ with a periodic boundary condition, thus the maximum number of cells on the lattice is $K$, and suppose that the current number of cells is $K-1$ while the death rate is zero. In such a setting, it is easy to show that the selected event is expected to be rejected $K-1$ times before acceptance.
Indeed, in these cases, i. e., in a crowded environment, it is computationally intensive to run the simulation, not only because \textit{"nothing seems to happen"} most of the time, but also because generating the random waiting time from an exponential distribution requires calculation of a logarithm, which is a rather demanding task. 
Clearly, in this framework we cannot just choose the cell 'wisely' or choose an empty adjacent site 'on purpose' to speed up the simulation, as we are also sampling the distribution of the waiting time. 
Therefore, the number of times we discard a reaction is just as important as the reaction itself. 

In our reduced rate method we are using another approach to define the reactions and the corresponding propensities. In addition, the algorithm has a different basis, so that all the reactions that are drawn are always executed.

Next, we present the prompt decision method (PDM) that applies the volume exclusion principle directly and incorporates the newly defined reactions. 
Then, we introduce our other novel approach, the reduced rate method (RRM), which applies volume exclusion in an indirect way. 
After that, we intend to prove that in a given state of the system the \textit{time until the next reaction to be realized} and also \textit{the next reaction to be realized} are drawn from the same time and event distribution in both the PDM and the RRM algorithms, therefore, they are mathematically equivalent and are interchangeable.
Finally, we experiment with some toy models to illustrate the potential in the newly incorporated contact dependent reactions.
We also provide the corresponding Python codes to illustrate the strength and the flexibility of the new approach.

\section{Methods}\label{met}

\subsection{Lattice discretization and state space}\label{sec:lattice_state}
All of the presented methods make the following assumptions: consider spherical cells of equal size, with diameter $d$, place the cells on a 2-dimensional square lattice with grid constant $d$ and size $(n_1\times n_2)$ where $n_1, n_2\in \{1,2,\dots\}$ are finite or infinite. 
Thus, the lattice has unit length spacing, allowing us to compare various cell lines with their characteristic cell sizes. We use the terms grid and lattice interchangeably. Also the terms neighbor, grid neighbor and adjacent site refer to the same concept.

We assume that the cells use chemical or mechanical mechanisms to sense each other's presence – specifically, whether or not their immediate grid neighbors are occupied. 
In the case of a finite lattice, we use a reflective boundary condition: the boundaries of the grid are perceived by the cells as occupied neighbors. A cell has only free and occupied neighbors: an adjacent site that is not free is occupied and vice versa.
During our toy models, we assume von Neumann neighborhood, thus a given cell has four adjacent neighbors on the lattice, $\eta=4$ (up, down, left, right).
Consequently, a cell may have $j\in\{0,\dots,\eta\}$ free and $(\eta-j)$ occupied neighbors. From now on, index $j$ will always denote the number of free adjacent sites of the concerned cell or cells.
The method we shall discuss may be applied immediately to three dimensions, other lattice types, other neighborhood types (see Fig.\ref{fig:lattice}), and even to lattices with special shapes. 

\begin{figure}[h]
    \centering
    \includegraphics[width=.75\textwidth]{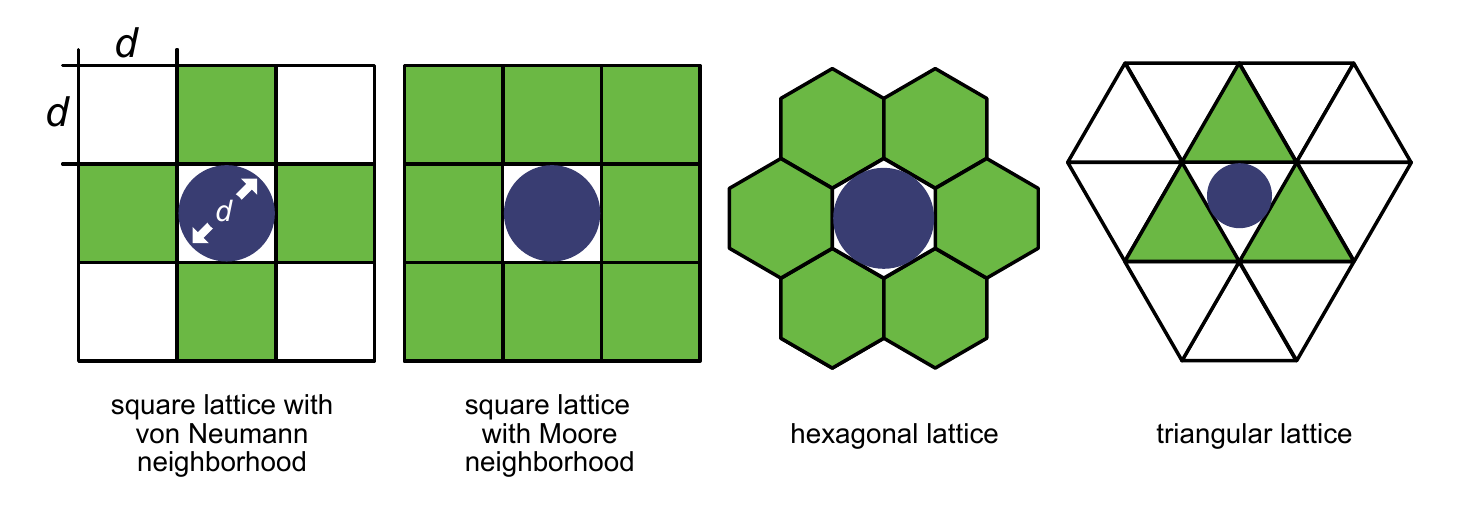}
    \caption{\textbf{Common lattice types in two dimensions.} Square and hexagonal lattices are widely used in lattice-based cell population models, triangular lattices are mostly used in lattice gas models. Each lattice contains one circular cell (purple) and the adjacent lattice sites are shaded green. The number of neighboring sites in these types are $\eta= 4, 8, 6, 3$ in order. Thus a cell may have $j\in\{0,\dots,\eta\}$ free and $(\eta-j)$ occupied adjacent sites. Throughout the paper, we use a square lattice with von Neumann neighborhood (leftmost figure). In this case we assume that the cell has diameter $d$ matching with the grid constant of the lattice as indicated on the figure.} 
    \label{fig:lattice}
\end{figure}

We assume that the principle of volume exclusion applies for all cells: there can be at most one cell per lattice site under the above conditions. 
Thus, by volume exclusion, the capacity of a finite grid with size $(n_1\times n_2)$, where $n_1,n_2<\infty$, is $K=n_1\cdot n_2$. 
Which means that the lattice can support at most $K$ cells.

\paragraph{State space} 
Let $t\in[0,\infty)$ represent time, and $N_j(t)$ denote the number of cells in the population with $j$ free neighbors at time $t$. Then let $N=N(t)=\sum_{j=0}^4N_j(t)$ be the total number of cells in the grid at time $t$ (the time dependence is usually not denoted for simplicity).

We introduce the state vector 
\begin{equation}
    \underbar{X}(t) = (C_1(t),\dots, C_N(t), S_1(t),\dots,S_m(t))
\end{equation}
where $C_i(t)$ is the coordinates of the cell with index $i\in\{1,\dots,N\}$ at time $t$ on the lattice.
By volume exclusion, it is clear that no two cells have the same coordinates at $t$. 
Scalar variables $S_s(t)\in\mathbb{N}_0$ ($s\in\{1,\dots,m\}$) store the amount of chemical species present in the cell space at time $t$. In order to ease the notation we refer to the state of the cells and the state of chemicals the following way:
$$\underbar{C}=(C_1(t),\dots, C_N(t)) \text{ and } \underbar{S}=(S_1(t),\dots,S_m(t)).$$
Fig.\ref{fig:snapshot} shows the time evolution of a cell population along with a snapshot of the lattice configuration at time $t*$.
\begin{figure}[h]
    \centering
    \includegraphics[width=.7
    \textwidth]{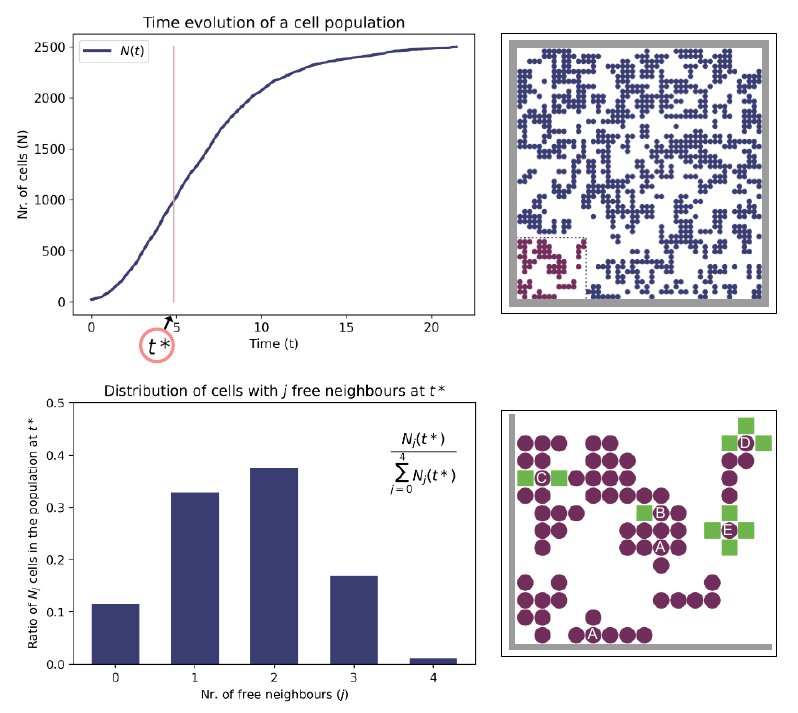}
    \caption{\textbf{Time evolution of a cell culture.} The \textit{top left figure} shows the time evolution of a proliferating cell population on a lattice with carrying capacity $K=2500$. In this experiment we chose the proliferation rate to be 1 and we assumed that no other reaction can happen to the cells. The \textit{bottom left figure} shows a snapshot of the ratio of cells with $j$ free neighbors in the population at $t*=4.8$ (see the formula in the figure). In this state the toal number of cells is $N(t*)=1000$ and very few cells -- only 1.1\% of the population -- have $j=4$ free adjacent sites. In fact, most of the cells have $j=2$ neighbors at $t*$. The \textit{top right figure} shows a snapshot of the lattice configuration at $t*$ and the \textit{bottom right figure} is an enlarged part of the highlighted area of the whole lattice. On this latter figure we selected some cells: A, B, C, D, and E having $0\dots4$ free adjacents sites, in order. Their corresponding free neighbors are colored green and the border of the lattice is marked in gray. We assume that the cells perceive the boundary as an occupied neighbor -- thus both 'A' cells have zero neighbors at $t*$.}
    \label{fig:snapshot}
\end{figure}

\subsection{Reactions}\label{sec:reactions}
We use the term 'reactions' in the sense it is used in chemical modeling. We also refer to movement, division or cell death as reactions. We will adopt the usual notation to denote the reaction index by $\mu\in\{1,\dots,M\}$.
Our goal is to provide a generalized and therefore, a rather flexible modeling framework for cell culture simulation that can be utilized to capture exceptionally complex mechanisms occurring in cell cultures. In order to achieve this, any number and combination of the reactions described in this section can be included in the model the researcher is working on, which covers a multitude of possible activities in a modeled cell culture with very few technical restrictions.

It is important to keep in mind that selecting the reaction and selecting the target cell taking the action are executed in substantially different ways in the PDM and RRM algorithms.
At this point, we only focus on what happens to the state variables if a reaction is '\textit{successful}' in the sense that it is selected and \textit{can be} executed.
The differences will be highlighted in the following sections, when we introduce the methods.
Next, we describe the reactions (Table \ref{t:reactions}.) that are introduced into the model during the prompt decision and reduced rate methods.

\begin{table}[h!]
\centering
{\smaller
\begin{tabular}{ |l|c||c| }
 \hline
 \multicolumn{2}{|c||}{Reaction} & In a finite environment...\\
 \hline \hline
 
\multirowcell{4}{Contact-inhibited} & Movement ($\otimes$) & \multirowcell{4}{the per capita rate decreases \\as the population grows,}\\
& Division ($\otimes$) & \\
& Cell death &\\
& Cellular biochemical reaction &\\
\hline

\multirowcell{3}{Spontaneous}& Cell death &\multirowcell{3}{the per capita rate is \\independent of pop. size,}\\
& Cellular biochemical reaction &\\
& Non-cellular biochemical reaction &\\
\hline

\multirowcell{2}{Contact-promoted}& Cell death &\multirowcell{2}{the per capita rate increases \\ as the population grows.}\\
& Cellular biochemical reaction &\\
\hline
\end{tabular}
}
\caption{\textbf{Common types of reactions that may occur in a cell culture}. $\otimes$ indicates volume exclusion. The reactions defined in this table are intended to provide a modeling framework for the exact simulation of cell cultures, that can be easily adapted to the problem the reader is considering. A prominent aspect of this work would be to show that, using these types of reactions, the PDM and RRM algorithms are mathematically equivalent. For detailed explanations and possible interpretations see Sec. \ref{sec:reactions}.}
\label{t:reactions}
\end{table}

\subsubsection{Non-biochemical reactions}
We define the contact-dependent and contact-independent movement, proliferation, and death of the cells as non-biochemical reactions.

\paragraph{Contact-inhibited reactions: movement, proliferation, and cell death}
Contact-inhibited reaction rates of a cell decreases as the number of its free neighbors decreases.
Typical examples are the volume excluding reactions, movement and proliferation, but we shall define two other types of such reactions as well.
First, consider the movement and division of the cells. 

Cells may perform random walk on the grid with a motility rate $r_{\text{mot}}$ per unit time, moving to another adjacent lattice site or divide with a proliferation rate $r_{\text{div}}$ per unit time giving rise to another agent, such that the principle of volume exclusion applies.
Consequently, a movement rate of $r_{\text{mot}}$ and a division rate of $r_{\text{div}}$ can only be measured for a sufficiently sparse population of cells. 
In a dense population, the cells are expected to interfere with each other, thus lower $r_1$ and $r_2$ rates are observed. 

After selecting the movement or division reaction, a target cell is selected from the population. 
If the reaction is a \textit{successful} movement, then the cell is moved to its selected free adjacent position on the lattice and its coordinates are updated in $\underbar{X}(t)$.
If the reaction is a \textit{successful} cell division, the target cell places a daughter cell ($X_{N+1}$) onto a selected free adjacent site on the lattice and the new cell along with its coordinates is appended to the state vector $\underbar{X}(t)$.

Cell-cell communication with local cellular signals are crucial factors to maintain a cell culture. 
In fact, abandoned cells that do not respond to the environment are destined to die in a healthy tissue. 
\textit{contact-inhibited cell death} may be a model for death due to the lack of these local effects.
In case of a cell death reaction, the target cell is removed from the lattice and its coordinates are deleted from the state vector $\underbar{X}(t)$.

\paragraph{Spontaneous (contact-independent) reaction: cell death}

It is also assumed that cells can be involved in events that are not influenced by the number of free adjacent sites. Such a reaction is, for example, \textit{spontaneous cell death}, which is a standard model of naturally occurring death events due to the finite lifespan of the individuals.
Spontaneous cell death affects all cells equally, regardless of their adjacencies, and when it occurs, the cell that is randomly selected from the whole population would be removed from the lattice, and also from the state vector.

\paragraph{Contact-promoted reaction: cell death}

The rate of contact-promoted reactions increase as the environment gets more crowded. 
Thus, contact-promoted cell death may serve as a model for death due to competition for resources, such as oxygen or glucose, between the cells. 
For example, in tumor pathology, it is a well known effect that the core of some tumors suffer from hypoxia and substrate deprivation due to the lack of sufficient vascularization.
Contact-promoted death may be a good candidate to model death in such cancer models. 
In case of a \textit{successful} contact-promoted death, the target cell is removed from the lattice and also from the state vector. 

\subsubsection{Biochemical reactions}
These reactions include (bio)chemical reactions that do or do not involve the cells themselves, called cellular and non-cellular biochemical reactions respectively. Examples of these reactions would be cell-cell communication, product synthesis (proteins, nucleic acids, hormones, signaling factors, toxins etc.), or substrate consumption. 

\paragraph{Cellular biochemical reactions}
We assume that a subset of the reactants $\{S_1,\dots,S_m\}$ in reactant combination $h_\mu$ and a cell $X_i(t)$ may interact and engage in \textit{cellular biochemical reactions}. 
These reactions may be contact-dependent (inhibited or promoted) or spontaneous reactions. 
Cellular biochemical reactions may have four types of outcomes: i) cell movement, ii) division, iii) death or iv) a $(\Delta S_1,\dots,\Delta S_m)\in\mathbb{Z}^m$ change in the number of the reactants. 
The state space has to be updated according to the ways we discussed so far in this section or in Sec. \ref{sec:gil-method}.

\paragraph{Non-cellular biochemical reactions}
A subset of the reactants $\{S_1,\dots,S_m\}$ may also participate in spontaneous \textit{non-cellular chemical reactions} in reactant combination $h_\mu$. Non-cellular biochemical reactions are independent from the cells and only depend on the reactants.
These reactions may result in a $(\Delta S_1,\dots,\Delta S_m)\in\mathbb{Z}^m$ change in the number of the reactants. The state space has to be updated according to the ways we explained in Sec. \ref{sec:gil-method}.

Note that if we exclude all other reactions, including movement and cell division, and keep only the non-cellular chemical reactions, the cell culture simulation algorithm reduces to Gillespie's SSA.

\subsubsection{Some restrictions concerning the reactants}
We intend to show that the PDM and RRM algorithms are mathematically equivalent ways to obtain realizations of a given stochastic process concerning a cell culture.
In order to achieve this, we must apply some restrictions on the reactants and the state space, limiting the algorithms for the sake of this proof. 
We assume that the state of the cells is only determined by their individual positions on the lattice (and its neighborhood), and they do not differ in any other way. Also, reactants $S_1,\dots,S_m$ do not diffuse in the intercellular space. 

These conditions restrict the possibilities enabled by the RRM algorithm, but these conditions are necessary for a direct comparison and proof of equivalence between the RRM and the more limited PDM approaches. A model defined using the RRM algorithm without these restrictions may be impossible to implement in the PDM approach.

Under these confined conditions, we may still implement a multitude of biological mechanisms: the reactants may be secretions from an internal or external secretory gland, which are not released into the intercellular space but into the bloodstream or into a lumen. Or we may also assume homogenous mixing of the reactants in the cell space so that the products of the cellular or non-cellular chemical reactions reach all cells equally. This can be easily achieved in an experimental setup where we assume a layer of fluid over the cells on a 2D surface, homogenized by continuous mixing.

\section{Cell Culture Simulation Algorithms}
In this section, we introduce the PDM and the RRM algorithms in details. We also obtain an equivalent, but simplified version of the RRM algorithm, the marginal RRM (mRMM), that is expected to be easier to implement.
In order to ease the further notations, for every reaction $\mu\in\{1,,\dots,M\}$ we introduce 
\begin{equation}\label{eq:r}
r_\mu=c_\mu h_\mu.   
\end{equation}
Where $c_\mu$ is the reaction rate of reaction $\mu$. We have to adjust the definition of $h_\mu$, as there may be reactions in the Cell Culture Simulation Algorithms in which chemicals $S_1,\dots,S_m$ are not involved.
{\small
\begin{equation}\label{eq:h_mu}
    h_\mu = \left\{
        \begin{array}{ll}
        1 & \quad \text{if chemical species are not involved in  $\mu$} \\
        \text{Nr. of distinct reactant combinations}& \quad \text{if chemical species are involved in  $\mu$}
    \end{array}
    \right.
\end{equation}
}

See Table \ref{t:combinations} and \cite{Gillespie-1} for some reactions and the corresponding $h_\mu$. 
Note that this notation slightly differs from Gillespie's original notation, where $c_\mu h_\mu$ would be the propensity function corresponding to reaction $\mu$. 
In the cell culture simulation algorithms most reactions depend on the state $\underbar{C}$ of the cells, this is what we are intended to emphasize with this distinguished notation. 
Basically, $a_\mu=r_\mu\cdot f(\underbar{C})$, where the definition of the function $f:\underbar{C}\rightarrow[0,\infty)$ is essentially different in the PDM and the RRM formulation.

Given that the two simulation methods differ not only in the algorithm but also in the definition of the reactions, we will denote reactions, events, and variables specific to the PDM with a hat $(\hat{\bullet})$ symbol.
In case of the RRM, we do not use any specific notation. 
Some variables, events, and their probabilities depend only on the state of the system: e.g. the number of cells $N_j$ with $j$ free neighboring lattice sites, the reaction rates $c_\mu$,  reactant combinations $h_\mu$ etc.
In a given state their values are the same in both methods, so we use the same notations for these in the algorithms. 

\subsection{The Prompt Decision Method (PDM)}

The PDM is an extended version of the stochastic simulation algorithm introduced by Baker et al. \cite{CorrBDM}. 
It is a classical acceptance-rejection sampling method that samples the two-variable density function $P(\tau, \mu)$ given the state $\underbar{X}(t)$ of the system, that contains information of the number $N(t)$ and positions $C_i(t)$ of the cells, as well as the amount of the interacting chemical species $S_s(t)$.
Thus, the PDM directly incorporates the principle of volume exclusion by rejecting reactions that are impossible due to the configuration of the cells on the lattice.

\subsubsection*{Propensities and probabilities}

For the reactions $\hat{R}_\mu$ of types we discussed in Sec. \ref{sec:reactions} (also see Table \ref{t:reactions}) at time $t$, state $\underbar{X}(t)$, we define propensity functions $\hat{a}_\mu$ and their sum $\hat{a}$ the following way:
\begin{equation}\label{eq:prop_ex}
\hat{a}_\mu = r_\mu N, \hspace{0.5cm}
\hat{a}=\sum_{\mu=1}^M\hat{a}_\mu=N\sum_{\mu=1}^Mr_\mu.
\end{equation}
Notice that at first glance non-cellular chemical reactions do not fit in this formulation, since by definition, their propensities only depend on $\underbar{S}(t)$ and do not depend on the state of the cells. To overcome this we may think of the number of cells as a scaling factor $1/N$ to the reaction rate constant. Then, for a non-cellular chemical reaction $\mu$ we have $$\hat{a}_\mu=c_\mu h_\mu=\frac{c_\mu h_\mu}{N} N=r_\mu N.$$
With this notation, non-cellular chemical reactions reduce to a special case of the spontaneous type cellular biochemical reactions.
Let $\hat{P}_1(\hat{\tau})$ denote the probability density function of the sojourn time until the next \textit{possible} event and $\hat{P}_2(\mu|\hat{\tau})$ denote the probability mass function of the next possible event:

\begin{equation} 
\label{eq:pmf_ex}
\hat{P}_1(\hat{\tau})=\hat{a}e^{-\hat{a}\hat{\tau}},\hspace{0.5cm}\hat{P}_2(\mu|\hat{\tau})=\hat{a}_\mu/\hat{a}, \hspace{0.2cm}\mu\in\{1,\dots,8\}.
\end{equation}

\subsubsection*{Assumptions in the PDM algorithm}

The algorithm is derived from the following assumptions:
\begin{itemize}

    \item \textbf{Selecting the next possible reaction:} from $\hat{P}_2(\mu|\hat{\tau})$ (Eq. \ref{eq:pmf_ex}) we draw reaction index $\mu$. We will use the notation $\hat{R}_\mu$ for a selected reaction in the PDM algorithm. It is easy to see that the probability of selecting $\hat{R}_\mu$ does not depend on the actual number of cells in this approach (cf. Eq. \ref{eq:prop_ex} and \ref{eq:pmf_ex}).
    
    \item \textbf{Target cell selection:} a target cell is selected from the population with a discrete uniform distribution, without taking the position of the cells into account. Let $\hat{E}_j$ denote the random event that we select a target cell that has $j$ free neighbors.
    
    \item \textbf{The execution of the reaction depends on the neighborhood of the target cell}, the spontaneous reactions are always executed, contact-inhibited or contact-promoted events may be discarded depending on the neighborhood of the target cell. A more detailed explanation of this acceptance-rejection step is about to come later. During one iteration we make trials to decide if the selected possible reaction can be the next reaction to be realized.
    
    \item \textbf{Selecting the waiting time:} as we may discard the selected \textit{possible reactions} several times during one iteration, we introduce $l\in\{1,2\dots,\}$ to index the trials during one iteration. Thus, in a given iteration for the $l$-th trial we generate sojourn time $\hat{\tau_l}\sim Exp(\hat{a})$ (Eq. \ref{eq:pmf_ex}) corresponding to the next \textit{possible} event.\\
    For the sake of clarity (and the convenience of the proof), we define the variable $\hat{T}$, into which we sum the sojourn times generated for the \textit{possible events} during one iteration.
    Accordingly, the time elapsed between two consecutive \textit{realized events} is
    \begin{equation}
    \label{eq:T}
    \hat{T}=\sum_{l=1}^k\hat{\tau}_l,
    \end{equation}
    given that we were able to execute the reaction in the $k$-th attempt (i. e., $k-1$ possible events were discarded in this case). Naturally $k$ is a discrete random variable that is geometrically distributed, as we shall see later.
    In practice, book the intervent time $\hat{\tau}_l$ right after generating it: $\hat{T}\leftarrow \hat{T}+ \hat{\tau}_l$.
    \item \textbf{During the decision step} at the $l$-th attempt
    \begin{itemize}
        \item In case of contact-inhibited reactions, randomly select a neighboring site of the target cell from its neighbors with discrete uniform distribution. If the site is \textit{occupied}, or it is outside the border, reject the reaction. If the site is \textit{empty} execute the reaction and increase the system time: $t\leftarrow t+ \hat{T}$ and set $\hat{T}=0$.
        \item In case of spontaneous reactions execute the reaction on the selected cell and increase the system time: $t\leftarrow t+ \hat{T}$ and set $\hat{T}=0$.
        \item In case of contact-promoted reactions, randomly select a neighboring site of the target cell from its neighbors with discrete uniform distribution. If the site is \textit{occupied}, or it is outside the border, execute the reaction, increase the system time: $t\leftarrow t+ \hat{T}$ and set $\hat{T}=0$. If the site is \textit{empty}, reject the reaction.
    \end{itemize}
\end{itemize}
The steps of the algorithm do not follow the order of the assumptions, we chose this order to streamline the explanation. We solely introduced the variable $\hat{T}$ to make the proof more convenient. In practice, it is not necessary to occupy memory space for $\hat{T}$ and waste machine time by updating it. One may simply increase the system time right after generating the waiting time.

\subsubsection*{Data structure}

From a practical point of view, the implementation of $\underbar{X}(t)$ is a list or an array.
However, we need a data structure that allows us to easily determine whether a lattice site $(x_*,y_*)$ is empty or occupied ($x_*\in \{1,\dots, n_1\},y_*\in \{1,\dots,n_2\}$). Searching in arrays or lists typically has $O(N)$ time complexity, thus it is more convenient to use the underlying lattice to book-keep the occupied lattice sites. 

Hence, we define the matrix $L\in\{0,1\}^{n_1\times n_2}$ for $n_1,n_2<\infty$, where 0 and 1 stand for an empty and an occupied site, respectively. Checking the value of an element $L[x_*,y_*]$ usually has a constant time complexity, $O(1)$. Using $\underbar{X}(t)$ and $L$ simultaneously has two advantages. First, randomly selecting the target cell from the list $\underbar{X}(t)$ is of $O(1)$, but it can be very time consuming to randomly select a target cell from $L$. Second, with using $L$ we may obtain the occupancy of a lattice site in constant time. These assumptions hold for most programming languages, e. g., for Python that we shall use to implement our toy models.
With the introduced notations, the Prompt Decision Method is as follows.

{\smaller
\paragraph{Algorithm Prompt Decision Method}
\begin{enumerate}
	\item \textbf{Initialization:}
	\begin{alphlist}
    	\item set $t\leftarrow 0$, prescribe halting conditions $H$,
    	\item initialize the lattice $L$ with size $n_1\times n_2$, \\
    	place $N(0)\leq n_1\cdot n_2$ cells on the lattice according to essay,\\ store location of cells and amount of chemical species in $\underbar{X}(t)$
	\end{alphlist}
	\item \textbf{Calculate propensity functions} $\hat{a}_\mu$ for all $\mu\in\{1,\dots,8\}$ according to Eq. (\ref{eq:prop_ex}) and \textbf{set} $\hat{T}=0$.

	\item \label{step:gentime}\textbf{Generate intervent time:} choose $\hat{\tau}\sim Exp(\hat{a})$ (Eq. \ref{eq:pmf_ex}) and
	\textbf{set} $\hat{T}\leftarrow \hat{T}+\hat{\tau}$.
	
	\item \textbf{Reaction selection:} choose $\mu$ according to $\hat{P}_2(\mu|\hat{\tau})$ (Eq. \ref{eq:pmf_ex}).
	
	\item \textbf{Cell selection:} draw target cell $X_k(t)$ from the population $\underbar{X}(t)$ with discrete uniform distribution.
	
	\item \textbf{Decision step:}
	  \begin{itemize}
	    
        \item In case of a non-cellular chemical reaction execute the reaction on the concerned chemical species.\\
        Update data structure. Go to Step \ref{step:updatetime}
	  
        \item In case of a spontaneous reaction execute the reaction on target cell and update data structures.\\
        Go to Step \ref{step:updatetime}.
        
        \item In case of a contact-inhibited reaction randomly select a neighboring site of the target cell from its adjacent sites, with discrete uniform distribution. 
        \begin{itemize}
            \item If the target site is \textit{occupied}, or it is outside the border, reject the reaction. Go to Step \ref{step:gentime}.
            \item If the site is \textit{empty} execute the reaction and update data structures. Go to Step \ref{step:updatetime}.
        \end{itemize}

        \item In case of a contact-promoted reaction randomly select a neighboring target site of the target cell from its adjacent sites, with discrete uniform distribution. 
        \begin{itemize}
            \item If the site is \textit{empty}, reject the reaction. Go to Step \ref{step:gentime}.
            \item If the site is \textit{occupied}, or it is outside the border, execute the reaction, update data structures. Go to Step \ref{step:updatetime}.
        \end{itemize}
	\end{itemize}
	
	\item \textbf{Update time:}\label{step:updatetime} set $t\leftarrow t+\hat{T}$.
	
	\item \textbf{Halt if $H = True$ else continue the process with Step 2.}
\end{enumerate}
}

\subsection{The Reduced Rate Method (RRM)}\label{sec:RRA}

The essence of the method is the classification of the cells according to the number of their free neighbors on the lattice. 
From a theoretical viewpoint, this approach enables us to define a state space in which the system is 'well-stirred' in the sense that the probability of a reaction to happen in the next time interval $\delta t$ does not depend on the spatial localization of the reactants (cells). Thus, with the proper scaling of the reaction rates and some appropriate assumptions we reduce the problem to Gillespie's SSA formalism.

\subsubsection*{Assumptions in the RRM algorithm}

According to the restrictions we made, the products of the cellular and non-cellular (bio)chemical  reactions neither change the state of the cells nor diffuse: the cells only differ in their positions on the lattice.
Based on this assumption, we may classify the cells at state $\underbar{X}(t)$, according to the number of their free adjacent sites $j$. 
We refer to cells that have $j$ free adjacent sites \textit{class-j cells}.
At any state $\underbar{X}(t)$, the value of $N_j=N_j(t)$ for $\forall j\in\{0,\dots,4\}$ and the positions of the cells in class-$j$ are known, thus, we can define the propensities and choose the target cells based on this information.

We assume that \textbf{contact dependent reactions occur between cells and their} ($j$ free or $(4-j)$ occupied) \textbf{neighbors}.
Thus, in the RRM algorithm selecting a reaction defines two things: i) the reaction index $\mu$, that defines the change in the state $\underbar{X}$ and ii) the class-$j$ of the target cell.
We define reaction $R_{\mu,j}$, that is reaction with index $\mu$ acting on a cell having $j$ free neighbors (and possibly causes a $(\Delta S_1,\dots,\Delta S_m)$  change in the number of the chemical species).
\subsubsection*{Propensities and probabilities}

For all $\mu\in\{1,\dots,M\}$ and all $j\in\{0,\dots,4\}$ we define propensity functions $a_{\mu,j}$ according to the following considerations.

\begin{subequations}
\label{eq:rr_prop}
   \textbf{ Contact-inhibited reactions} occur when a cell 'reacts' with one of its empty neighbors. Thus, in this case the reactants are the cells and their empty adjacent sites on the lattice. The number of possible combinations between the class-$j$ cells and their free adjacent sites is $j\cdot N_j$.
    The reaction rate of an elementary reaction, with one empty neighbor, is $r_\mu/4$. 
    Thus, the propensity function of the contact-inhibited reaction $\mu$ acting on a cell with $j$ free adjacent sites is:
\begin{equation}
    \label{eq:rr_prop_alpha_beta}
    a_{\mu,j}=\frac{r_\mu}{4}jN_j, \hspace{.3cm} \forall j\in\{0,\dots,4\}.
\end{equation}

The propensities of \textbf{spontaneous reactions} does not depend on the number of free neighbors of the cell. Nevertheless, we use the two-index notation to simplify the proof later.
\begin{equation}
    \label{eq:rr_prop_delta_omega}
    a_{\mu,j}=r_\mu N_j, \hspace{.3cm} \forall j\in\{0,\dots,4\}.
\end{equation}
For non-cellular chemical reactions we use the same scaling we used in the PDM algorithm:
\begin{equation}
    \label{eq:rr_prop_non-cel}
    a_{\mu, j}=\frac{c_\mu h_\mu}{N} N_j=r_\mu N_j\hspace{.3cm} \forall j\in\{0,\dots,4\}.
\end{equation}

\textbf{Contact-promoted reactions} occur when a cell 'reacts' with one of its occupied neighbors. 
The number of possible combinations between the class-$j$ cells and their occupied $4-j$ neighbors is: $(4-j)\cdot N_j$. The rate of such an elementary reaction is $r_\mu/4$. The propensity function of the contact-promoted reaction $\mu$ acting on a class-$j$ cell is:
\begin{equation}
    \label{eq:rr_prop_kappa}
    a_{\mu,j}=\frac{r_\mu}{4}(4-j) N_j(t),
    \hspace{.3cm} \forall j\in\{0,\dots,4\}.
\end{equation}
\end{subequations}
We assume that reaction indices $\mu\in\{1,\dots,\alpha\}$ belong to the contact-inhibited, $\mu\in\{\alpha+1,\dots,\alpha+\beta\}$ belong to the spontaneous and $\mu\in\{\alpha+\beta+1,\dots,M\}$ belong to the contact dependent reactions.
The sum of the propensities is:
\begin{equation}
\label{eq:rr_a}
\begin{split}
a&= \sum_{\mu=1}^M\sum_{j=0}^4 a_{\mu,j}=
\frac{1}{4}\sum_{\mu=1}^\alpha r_\mu\sum_{j=0}^4jN_j+
\sum_{\mu=\alpha+1}^{\alpha+\beta} r_\mu N+
\frac{1}{4}\sum_{\mu=\alpha+\beta+1}^{M} r_\mu \sum_{j=0}^4(4-j)N_j
\end{split}
\end{equation}
Unlike in the PDM, in case of the RRM $a$ may be zero in some models even with $N>0$.
We have to ensure that this does not cause problems while drawing the waiting time.

With this notations $a_{\mu,j}\delta t$ is the probability, to first order in $\delta t$, that a reaction with index $\mu$ will occur to a class-$j$ cell on the lattice and possibly cause a $(\Delta S_1,\dots,\Delta S_m)$ change in the number of the chemical species in the next time interval $\delta t$.
We denote the corresponding event $R_{\mu,j}$. 

The propensity functions explicitly specify the corresponding $(\mu,j)$ pairs, thus, we obtain the joint probability of executing reaction $\mu$ on a class-$j$ cell.
With these considerations and notations, with $a>0$ the probability density function of the intervent time $\tau$ until the next reaction is $P_1(\tau)$ and the joint probability that the next reaction would be of index $\mu$ and would be executed on a class-$j$ cell given waiting time $\tau$ is $P_2((\mu,j)|\tau)$:
\begin{equation}
    \label{eq:rr_pmf}
    P_1(\tau)=a e^{-a\tau},
    \hspace{0.3cm}
    P_2\left((\mu,j)|\tau\right)=\frac{a_{\mu,j}}{a}=\vcentcolon P(R_{\mu,j}), \hspace{0.3cm} \mu\in\{1,\dots,M\},\hspace{0.2cm} j\in\{0,\dots,4\}
\end{equation}

\subsubsection*{Data structure}
Note that in case of a movement, proliferation, or a cell death event, the class of several cells can change at once. Thus, we need to reach the cells as quick as possible to adjust the corresponding changes. 
A convenient data structure for this (in Python) consists of five dictionaries, that we will store in one list, called $\underbar{D}$. Dictionary $\underbar{D}[j]$ contains the cells that have $j$ free adjacent sites in the following way: the indices of cells are the keys and the corresponding coordinates are the values. Naturally, the length of $\underbar{D}[j]$ is $N_j$. We also store the amount of chemical substances in a dedicated list $\underbar{B}$.

We keep the underlying lattice $L$ to check if a particular lattice site is occupied or empty. 
A convenient data structure is a matrix $L\in \mathbb{N}^{n_1\times n_2}$, where entry $L[x_*,y_*]$ is the index of the cell in the population. 
Note that every newborn cell has to be given a unique index to avoid complications. 
Thus, in some models the value of some indices may be greater than the capacity $K$ of the population. With these notations and assumptions, the Reduced Rate Method is as follows.

{\smaller
\paragraph{Algorithm Reduced Rate Method}
\begin{enumerate}
	\item \textbf{Initialization:}
	\begin{alphlist}
	\item set $t\leftarrow 0$, prescribe halting conditions $H$,
    \item initialize the lattice $L$ with size $n_1\times n_2$, initialize $\underbar{D}$,\\
    place $N(0)\leq n_1\cdot n_2$ cells on the lattice according to essay,\\ store location of class-$j$ cells in dictionary $\underbar{D}[j]$,
    \item initialize $\underbar{B}$,
    store initial amount of chemical species in list $\underbar{B}$.
	\end{alphlist}
	\item \textbf{Calculate propensity functions} $a_{\mu,j}$ and $a$ for all $\mu\in\{1,\dots,M\}$ and $j\in\{0,\dots,4\}$ according to Eqs.
	(\ref{eq:rr_prop}) and (\ref{eq:rr_a}). If $a=0$ halt the algorithm.
	\item \textbf{Generate intervent time:} choose $\tau$ according to $P_1(\tau)$ from Eq. (\ref{eq:rr_pmf}).
	\item \textbf{Select event and cell type:} choose $(\mu,j)$ according to $P(R_{\mu,j})$ from Eq. (\ref{eq:rr_pmf}). \\
	Select target cell from class $j$ with discrete uniform distribution. \\
	In case of a movement, or cell division, select one of the $j$ free adjacent sites of the target cell with discrete uniform distribution.
	\item \textbf{Execute and update:}
	\begin{alphlist}
	    \item Update time: set $t\leftarrow t+\tau$.
	    \item Execute the event.
		\item Update $\underbar{D},\underbar{B}$ and $L$.
	\end{alphlist}
	\item \textbf{Halt if $H = True$ else continue the process with Step (2).}
\end{enumerate}
}

\subsection{The marginal RRM algorithm (mRRM)}

Next, we describe a variant of the RRM in which the reaction index $\mu$ and the cell type $j$ are selected in two distinct steps.
With this approach, we may reduce the number of propensity functions which may make the implementation more convenient and may reduce the simulation runtime. 
In the mRRM algorithm we use the same data structure we introduced in case of the RRM algorithm. 
Assume that reactions with index $\mu\in\{1,\dots,\alpha\}$ are contact-inhibited, $\mu\in\{\alpha+1,\dots,\alpha+\beta\}$ are spontaneous and $\mu\in\{\alpha+\beta+1,\dots,M\}$ are contact dependent reactions.

\subsubsection*{Propensities and probabilities}
Consider the propensity functions $a_{\mu,j}$ of the RRM (Eq. \ref{eq:rr_prop}) and the joint probabilities $P(R_{\mu,j})=a_{\mu,j}/a$ obtained from them.
\noindent First, for every reaction $\mu\in\{1,\dots,M\}$, we sum over $j$ obtaining the marginal probabilities $P(R_{\mu})=a_{\mu}/a$ for reaction $R_\mu$. The corresponding propensities $a_\mu=\sum_{j=0}^4a_{\mu,j}$ are of form:
\begin{equation}
\label{eq:RRAm}
a_{\mu} = \left\{
        \begin{array}{ll}
        \frac{r_\mu}{4}\sum_{j=0}^4jN_j(t) & \quad \mu\in\{1,\dots,\alpha\} \\
        r_\mu N(t)& \quad \mu\in\{\alpha+1,\dots,\alpha+\beta\} \\
        \frac{r_\mu}{4}\sum_{j=0}^4(4-j) N_j(t) & \quad \mu\in\{\alpha+\beta+1,\dots,M\}
    \end{array}
    \right.
\end{equation}
\noindent Naturally, the sum of the propensity functions are the same in the RRM and mRRM algorithms:
\begin{equation}\label{mRRM_a}
    a=\sum_{\mu=1}^M\sum_{j=0}^4a_{\mu,j}=\sum_{\mu=1}^M a_\mu.
\end{equation}

\noindent In case $a>0$, the probability density function $P_1(\tau)$ of the waiting time $\tau$ until the next reaction, and the probability $P_2(\mu|\tau)$ of the next reaction $R_\mu$ are:
\begin{equation}
    \label{eq:RRm_pmf}
    P_1(\tau)=a e^{-a\tau},
    \hspace{0.5cm}
    P_2\left(\mu|\tau\right)=\frac{a_{\mu}}{a}=\vcentcolon P(R_{\mu}), \hspace{0.5cm} \mu\in\{1,\dots,M\}.
\end{equation}

\subsubsection*{Selecting the target cell}

Finally, we have to obtain a method for selecting the target cell in the mRRM algorithm. 
It would be tempting to simply choose the class-$j$ of the target cell according to $N_j/N$, in case of all reaction types.
But this would lead to a wrong result in general. 
Suppose, we chose cell movement to execute and suppose $N_0>0$. 
There would be a positive probability to choose class-0 from which to select the target cell, and thus select a cell with no free neighbors, even though such a cell is not able to move.

In the original RRM algorithm the probability of executing reaction $\mu$ on a type $j$ cell is $P(R_{\mu,j})$. 
We saw that in the mRRM algorithm, selecting reaction $R_\mu$ and selecting the type $j$ of the target cell is executed in two distinct steps, but these steps are not independent from each other. 
In the mRRM algorithm, let $A_j$ denote the event that the target cell is of class $j$.
For every reaction $R_\mu$, we look for the probability $P(A_j|R_\mu)$, with which $P(A_j|R_\mu)\cdot P(R_\mu)=P(R_{\mu,j})$ in the original RRM algorithm. That is:

\begin{equation}
\label{eq:P(Aj|Rmu)}
P(A_j|R_\mu) = \left\{
        \begin{array}{ll}
        \frac{jN_j}{\sum_{j=0}^4jN_j} & \quad \mu\in\{1,\dots,\alpha\} \\
        N_j/N & \quad \mu\in\{\alpha+1,\dots,\alpha+\beta\} \\
        \frac{(4-j)N_j}{\sum_{j=0}^4(4-j)N_j} & \quad \mu\in\{\alpha+\beta+1,\dots,M\}
    \end{array}
    \right.
\end{equation}

Notice that with Eq. (\ref{eq:P(Aj|Rmu)}) it is not possible to select class-$0$ in case of a contact-inhibited reaction. Likewise, it is not possible to select class-$4$ in case of a contact-promoted reaction.
On Fig. \ref{fig:distributions_mRRM} we illustrate Eq. (\ref{eq:P(Aj|Rmu)}) in a concrete example, the case of the snapshot of the simulation introduced on Fig.\ref{fig:snapshot} (Sec.\ref{sec:lattice_state}).
After selecting the class $j$ of the target cell according to Eq. (\ref{eq:P(Aj|Rmu)}), the algorithm chooses a target cell from the class $j$ cells with discrete uniform distribution.

It easy to see that the RRM and mRRM algorithms are mathematically equivalent formulations to obtain realizations of a given stochastic process concerning a cell culture.

\begin{figure}[h!]
    \centering
    \includegraphics[width=1\textwidth]{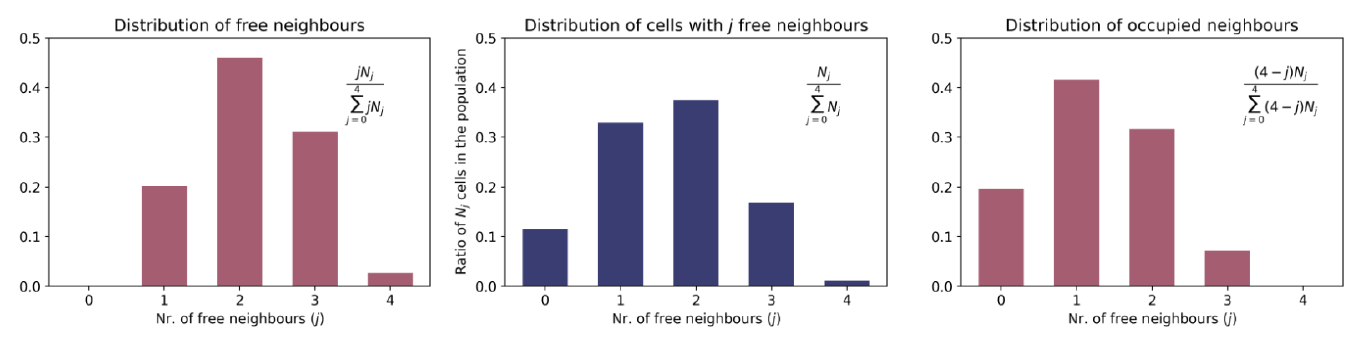}
    \caption{\textbf{Illustration of the distributions in Eq. (\ref{eq:P(Aj|Rmu)})} 
    The figures correspond to the snapshot of Fig.\ref{fig:snapshot}. 
    Suppose we had all three reaction types in the simulation at state $\underbar{X}(t*)$ of the system. 
    In case the next reaction would be a \textit{spontaneous reaction}, we should choose the class $j$ of the target cell according to the distribution on the middle figure, and then choose a particular target cell from class $j$ at random. This is equivalent of choosing a cell randomly from the total population.
    It is clear from the figure that we would choose the cell class $j=2$ cell with the highest probability, as the ratio of cells with two free adjacent sites is the highest in the population. 
    Since there are cells in all five classes, we may choose any class with nonzero probability in a spontaneous reaction.
    If the next reaction would be a \textit{contact-inhibited reaction}, we should choose the class of the target cell according to the distribution on the left figure. Since $0\cdot N_0=0$, we choose class $j=0$ with zero probability. Thus cells with zero free neighbors do not participate in contact-inhibited reactions. At this state of the system we would choose a class $j=2$ cell with the highest probability. 
    If the next reaction would be a \textit{contact-promoted reaction}, we should choose the target cell type from the distribution on the right. Since $(4-4)\cdot N_4=0$, we choose class $j=4$ with zero probability -- cells with 4 free adjacent sites do not participate in contact-promoted reactions. We would choose class $j=1$ with the highest probability at this state.}
    \label{fig:distributions_mRRM}
\end{figure}

{\smaller
\paragraph{Algorithm marginal Reduced Rate Method}
\begin{enumerate}
\item \textbf{Initialization:}
		\begin{alphlist}
	\item set $t\leftarrow 0$, prescribe halting conditions $H$,
    \item initialize the lattice $L$ with size $n_1\times n_2$, initialize $\underbar{D}$,\\
    place $N(0)\leq n_1\cdot n_2$ cells on the lattice according to essay,\\ store location of class-$j$ cells in dictionary $\underbar{D}[j]$,
    \item initialize $\underbar{B}$,
    store initial amount of chemical species in list $\underbar{B}$.
	\end{alphlist}
	\item \textbf{Calculate propensity functions} $a_{\mu}$ and $a$ for all $\mu\in\{1,\dots,M\}$ according to Eq. (\ref{eq:RRAm}) and (\ref{mRRM_a}). \\
	If $a=0$ halt the algorithm.
	\item \textbf{Generate intervent time:} choose $\tau$ according to $P_1(\tau)$ from Eq. (\ref{eq:RRm_pmf}).
	\item \textbf{Select event:} choose $\mu$ according to $P_2(\mu|\tau)$ from Eq. (\ref{eq:RRm_pmf}).
	\item \textbf{Select cell:} 
	\begin{itemize}
	    \item Choose the class $j$ of cells according to $P(A_j|R_\mu)$ from Eq. (\ref{eq:P(Aj|Rmu)}).
	    \item Choose a target cell randomly with uniform distribution from class $j$.
	    \item If $R_\mu$ correspond to movement or division, then choose one of the free neighbors of the target cell with discrete uniform distribution.
	\end{itemize} 
	\item \textbf{Execute and update:}
	\begin{alphlist}
	    \item Update time: set $t\leftarrow t+\tau$.
	    \item Execute the event.
		\item Update $\underbar{D},\underbar{B}$ and $L$.
	\end{alphlist}
	\item \textbf{Halt if $H = True$ else continue the process with Step (2).}
\end{enumerate}
}

\section{Equivalence and comparison of the algorithms}\label{equi}
Before stating our main theorem, we would like to emphasize the differences between the PDM and the RRM algorithms.
The first important difference is in the definition of the propensity functions: in case of the PDM, the propensity functions always depend on the total number of cells (Eq. \ref{eq:prop_ex}). In contrast, in the RRM we classify the cells according to the number of their free neighbors. Only 'spontaneous' reactions and the corresponding propensities depend on the total number of cells (Eq. \ref{eq:rr_prop_delta_omega}) other reactions occur between cells and their empty neighbors (Eq. \ref{eq:rr_prop_alpha_beta}) or between cells and their occupied neighbors (Eq. \ref{eq:rr_prop_kappa}).

The second important difference is that the drawn event does not always occur in the PDA approach, whereas the drawn event is always executed in the RRA approach.
The third important difference is that drawing reaction $\hat{R}_\mu$ and randomly selecting a target cell (that has $j$ free neighbors, event $\hat{E}_j$) are independent in the case of PDM. In the RRM, $R_{\mu,j}$ uniquely determines that the reaction with index $\mu$ has to be executed on a cell having $j$ free neighbors.

During the proof we assume that reactions with index $\mu\in\{1,\dots,\alpha\}$ are contact-inhibited, $\mu\in\{\alpha+1,\dots,\alpha+\beta\}$ are spontaneous and $\mu\in\{\alpha+\beta+1,\dots,M\}$ are contact dependent reactions. We also assume that their order of reaction indices are the same in both the PDM and RRM algorithms.

\begin{theorem}[On the equivalence of the PDM and RRM algorithms]
\label{thm:stateq}
Using the notations introduced so far, consider the state $\underbar{X}(t)$ of the system at time $t$. Suppose that, in this state both $\hat{a},a>0$ in PDM and RRM, respectively. Suppose we start both algorithms from $\underbar{X}(t)$ (i. e., $\hat{T}=0$ in case of the PDM). 

In the PDM algorithm let $\hat{R}_\mu$ denote the drawn reaction in the $k$-th attempt during the 'Reaction selection step' and let $\hat{E}_j$ denote the event that we choose a target cell with $j$ free neighbors in the corresponding 'Cell selection step' and let $\hat{S}$ denote the random event that we would be able to execute $\hat{R}_\mu$ in the 'Decision step'.

In the RRM algorithm let $R_{\mu,j}$ denote the next reaction. 
Let $\hat{T}$ and $\tau$ denote the soujurn times until the next executed reaction in the PDM and the RRM, respectively. Then
\begin{alphlist}
    \item $\hat{P}((\hat{R}_\mu\cap\hat{E}_j)|\hat{S})=P(R_{\mu,j})$, and
    \item $\hat{T}\sim Exp(a)$ and $\tau\sim Exp(a)$ with parameter $a$ defined in (\ref{eq:rr_a}).
\end{alphlist}
\end{theorem}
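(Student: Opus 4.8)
The plan is to unpack a single PDM \emph{attempt} into its constituent independent random choices — the reaction index drawn from $\hat P_2$, the uniformly chosen target cell, and the uniformly chosen neighbour in the decision step — evaluate the elementary probability of each, and then recombine them: via Bayes' rule for part (a), and via a compound-geometric argument for part (b). Throughout I would use the fact, built into the PDM description, that a rejected reaction leaves $\underbar{X}(t)$ unchanged, so that the successive attempts within one iteration are i.i.d.

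For part (a): by Eqs. (\ref{eq:prop_ex})--(\ref{eq:pmf_ex}), the drawn reaction satisfies $\hat P(\hat R_\mu)=\hat a_\mu/\hat a = r_\mu/\sum_{\nu=1}^M r_\nu$, independent of $N$, and uniform cell selection gives $\hat P(\hat E_j)=N_j/N$; since these two draws are independent, $\hat P(\hat R_\mu\cap\hat E_j)=r_\mu N_j/(N\sum_{\nu=1}^M r_\nu)$. Conditioning further on $\hat R_\mu\cap\hat E_j$, the uniform choice of one of the four neighbours makes $\hat S$ occur with probability $j/4$ when $\mu$ is contact-inhibited, with probability $1$ when $\mu$ is spontaneous (or a non-cellular chemical reaction), and with probability $(4-j)/4$ when $\mu$ is contact-promoted. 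Summing $\hat P(\hat S\mid\hat R_{\mu'}\cap\hat E_{j'})\,\hat P(\hat R_{\mu'}\cap\hat E_{j'})$ over all $(\mu',j')$ and matching the three resulting groups of terms with the three summands of Eq. (\ref{eq:rr_a}) yields the pivotal identity $\hat P(\hat S)=a/(N\sum_{\nu=1}^M r_\nu)=a/\hat a$. Bayes' rule then gives, for a contact-inhibited $\mu$, $\hat P((\hat R_\mu\cap\hat E_j)\mid\hat S) = \frac{(j/4)\,r_\mu N_j/(N\sum_\nu r_\nu)}{a/(N\sum_\nu r_\nu)} = \frac{r_\mu j N_j}{4a} = \frac{a_{\mu,j}}{a} = P(R_{\mu,j})$ by Eq. (\ref{eq:rr_prop_alpha_beta}); the spontaneous case reproduces $r_\mu N_j/a = a_{\mu,j}/a$ via Eqs. (\ref{eq:rr_prop_delta_omega})--(\ref{eq:rr_prop_non-cel}), and the contact-promoted case reproduces $r_\mu(4-j)N_j/(4a) = a_{\mu,j}/a$ via Eq. (\ref{eq:rr_prop_kappa}). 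This proves (a).

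For part (b): in the RRM, $\tau\sim Exp(a)$ holds by definition (Eq. \ref{eq:rr_pmf}). For the PDM, the number $k$ of attempts in one iteration is geometric with success probability $p=\hat P(\hat S)=a/\hat a$ (i.i.d.\ trials, since rejection does not alter the state), and $k$ is independent of the waiting times $\hat\tau_1,\hat\tau_2,\dots$, which are i.i.d.\ $Exp(\hat a)$. Hence $\hat T=\sum_{l=1}^k\hat\tau_l$ is a geometric compound of exponentials; conditioning on $k$ and summing a geometric series, its Laplace transform is $\mathbb E[e^{-s\hat T}] = p\sum_{n\ge1}(1-p)^{n-1}\bigl(\tfrac{\hat a}{\hat a+s}\bigr)^n = \frac{p\hat a}{p\hat a+s}$, which is the Laplace transform of $Exp(p\hat a)$. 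Since $p\hat a=a$, uniqueness of Laplace transforms gives $\hat T\sim Exp(a)$, matching $\tau$ and completing the proof.

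The step I expect to be the main obstacle — and the real content of the argument — is establishing $\hat P(\hat S)=a/\hat a$ cleanly: it is what ties the PDM's rejection probability to the RRM's total propensity $a$, and it requires carefully regrouping the per-attempt success probabilities ($j/4$, $1$, $(4-j)/4$, each weighted by $r_\mu N_j/(N\sum_\nu r_\nu)$) into the three sums of Eq. (\ref{eq:rr_a}). Once that identity is secured, part (a) is one line of Bayes per reaction type, and part (b) reduces to the standard fact that a geometrically distributed sum of i.i.d.\ exponentials is again exponential, with the parameter working out to $a$ precisely because $p\hat a=a$.
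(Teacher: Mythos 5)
Your proposal is correct and follows essentially the same route as the paper: you decompose each PDM attempt into the independent reaction, cell, and neighbour draws, establish the key identity $\hat{a}\hat{P}(\hat{S})=a$ by the law of total probability over the pairs $(\mu,j)$, and then apply Bayes' rule for part (a) and the geometric-compound-of-i.i.d.-exponentials structure for part (b). The only difference is one of technique in part (b): you identify the distribution of $\hat{T}$ through its Laplace transform $p\hat{a}/(p\hat{a}+s)$ and uniqueness, whereas the paper sums the Erlang cumulative distribution functions against the geometric weights to obtain $\hat{P}(\hat{T}<t)=1-e^{-\hat{a}\hat{P}(\hat{S})t}$ directly; both arguments give $\hat{T}\sim Exp(a)$.
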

\begin{proof} We will use a direct proof. Let us start with the probability distribution of events.

\textit{(a)} In the RRM algorithm $P(R_{\mu,j})=a_{\mu,j}/a$ (Eq. \ref{eq:rr_pmf}). We will show that in the PDM $\hat{P}((\hat{R}_\mu\cap\hat{E}_j)|S)=a_{\mu,j}/a$ with the definition of $a_{\mu,j}$ (Eq. \ref{eq:rr_prop}) and $a$ (Eq. \ref{eq:rr_a}).

In the PDM algorithm let $\hat{P}(\hat{S})$ denote the probability of the random event $\hat{S}$, that is, we would execute the selected $\hat{R}_\mu$ in the 'Decision step'. 
With Bayes theorem we obtain:
\begin{equation}
    \label{eq:ev_rr}
    \hat{P}(\hat{R}_\mu\cap \hat{E}_j|\hat{S})=\frac{\hat{P}(\hat{S}|\hat{R}_\mu\cap \hat{E}_j)\cdot \hat{P}(\hat{R}_\mu\cap \hat{E}_j)}{\hat{P}(\hat{S})}.
\end{equation}

During the 'Cell selection step' in the PDA algorithm, $\hat{E}_j$ is the event when we choose a target cell that has $j$ free neighbors.
The probability of this is the proportion of cells with $j$ neighbors in the population: $\hat{P}(\hat{E}_j)=N_j/N$. Now, because $\hat{R}_\mu$ and $\hat{E}_j$ are independent events:
\begin{equation}
\label{eq:P(mu,j)}
    \hat{P}(\hat{R}_\mu\cap \hat{E}_j)=\hat{P}(\hat{R}_\mu)\cdot \hat{P}(\hat{E}_j) = \hat{a}_\mu/\hat{a} \cdot N_j/N.
\end{equation}

In the PDM, during the \textbf{Decision step} spontaneous reactions are executed with a probability of 1. Contact-inhibited reactions are executed if we choose a target cell that has at least one free neighbor and we choose one of its free neighbors. Thus, in the PDM algorithm, we introduce the random event $\hat{F}$, which occurs when a contact-inhibited reaction is executed. contact-promoted reactions are executed if we choose a target cell that has occupied neighbor and we choose one of them. We shall denote the corresponding event $\hat{\overline{F}}$. Since, $\hat{R}_\mu$ and $\hat{F}$ are independent, we may obtain:
\begin{equation}
\label{eq:Proof_P(S)}
    \hat{P}(\hat{S})=
    \frac{\sum_{\mu=1}^\alpha\hat{a}_\mu}{\hat{a}}\hat{P}(\hat{F})+
    \frac{\sum_{\mu=\alpha+1}^{\alpha+\beta}\hat{a}_\mu}{\hat{a}}+
    \frac{\sum_{\mu=\alpha+\beta+1}^{M}\hat{a}_\mu}{\hat{a}}\hat{P}(\hat{\overline{F}}).
\end{equation}

Now, we need the probabilities $\hat{P}(\hat{F})$ and $\hat{P}(\hat{\overline{F}})$. The probability that a randomly chosen neighbor of a cell is free, given that the cell has $j\in\{0,\dots,4\}$ free neighbors is $\hat{P}(\hat{F}|\hat{E}_j)=j/4$. Since $\hat{E}_0,\dots,\hat{E}_4$ are pairwise disjoint events whose union is the entire sample space (i. e., we randomly choose a cell from the population that has 0 to 4 free von Neumann neighbors), we may obtain $\hat{P}(\hat{F})$ by the law of total probability:
\begin{equation}
    \hat{P}(\hat{F})=\sum_{j=0}^4\left(\hat{P}(\hat{F}|\hat{E}_j)\cdot \hat{P}(\hat{E}_j)\right)=\frac{1}{4N}\sum_{j=0}^4jN_j.
\end{equation}
Since $\hat{\overline{F}}$ is the complement event of $\hat{F}$: $\hat{P}(\hat{\overline{F}})=1-\hat{P}(\hat{F})=\frac{1}{4N}\sum_{j=0}^4(4-j)N_j$. Thus, we have a formula for every term in Eq. (\ref{eq:Proof_P(S)}).

Finally, we have to evaluate $\hat{P}(\hat{S}|\hat{R}_\mu\cap \hat{E}_j)$ that is the probability that the currently selected reaction can be executed given that it is a reaction $\hat{R}_\mu$ and the target cell has $j$ free neighbors. Given the reaction and the target cell, we have to choose one neighbor of the target cell at random with discrete uniform distribution. In case of a CIR we can execute $\hat{R}_\mu$ if the site is empty, with probability $j/4$. Similarly, a CPR can be executed if the chosen site is occupied, it occurs with probability $(4-j)/4$, and we can always execute a spontaneous reaction.
For all pairs $(\mu,j)$, where $j\in\{0,\dots,4\}$:

\begin{equation}
\label{eq:P(S|mu,j)}
\hat{P}(\hat{S}|\hat{R}_\mu\cap \hat{E}_j) = \left\{
        \begin{array}{ll}
            j/4 & \quad \mu\in\{1,\dots,\alpha\} \\
            1 & \quad \mu\in\{\alpha+1,\dots,\alpha+\beta\} \\
            (4-j)/4 & \quad \mu\in\{\alpha+\beta+1,\dots,M\}
        \end{array}
    \right.
\end{equation}

\noindent Now, from Eqs. (\ref{eq:prop_ex}), (\ref{eq:rr_a}),  (\ref{eq:Proof_P(S)}) and from the definition of $\hat{a}$ we have:
\begin{equation}\label{eq:Proof_aS}
    \hat{a}\hat{P}(\hat{S})=\frac{1}{4}\sum_{\nu=1}^\alpha r_\nu\sum_{j=0}^4jN_j+\sum_{\nu=\alpha+1}^{\alpha+\beta}r_\nu N+\frac{1}{4}\sum_{\nu=\alpha+\beta+1}^{M}r_\nu\sum_{j=0}^4(4-j)N_j=a.
\end{equation}

\noindent Thus, in the case of the contact-inhibited reaction $\mu\in\{1,\dots,\alpha\}$, for the probability (\ref{eq:ev_rr}), with using (\ref{eq:P(mu,j)}), (\ref{eq:Proof_P(S)}),  (\ref{eq:P(S|mu,j)}), and \ref{eq:Proof_aS} we obtain:
\begin{equation}
\begin{split}
    \hat{P}(\hat{R}_{\mu}\cap \hat{E}_{j}|\hat{S})
    &=\frac{\frac{j}{4}\frac{N_j}{N}\cdot\frac{\hat{a}_\mu}{\hat{a}}}{
    \hat{P}(\hat{S})}=\frac{\frac{1}{4N}(jN_j)\cdot r_\mu N}
    {a}=\frac{\frac{r_\mu}{4}jN_j}{a}=\frac{a_{\mu,j}}{a}=P(R_{\mu,j})
\end{split}
\end{equation}
For the spontaneous reactions $\mu\in\{\alpha+1,\dots,\alpha+\beta\}$:
\begin{equation}
\begin{split}
    \hat{P}(\hat{R}_{\mu}\cap \hat{E}_{j}|\hat{S})
    &=\frac{1\cdot\frac{N_j}{N}\cdot\frac{\hat{a}_\mu}{\hat{a}}}{\hat{P}(\hat{S})}=\dots=\frac{a_{\mu,j}}{a}=P(R_{\mu,j})
\end{split}
\end{equation}
For the contact-promoted reactions $\mu\in\{\alpha+\beta+1,\dots,\alpha+\beta+\gamma\}$:
\begin{equation}
\begin{split}
    \hat{P}(\hat{R}_{\mu}\cap \hat{E}_{j}|\hat{S})
    &=\frac{\frac{4-j}{4}\cdot\frac{N_j}{N}\cdot\frac{\hat{a}_\mu}{\hat{a}}}{\hat{P}(\hat{S})}=\dots=\frac{a_{\mu,j}}{a}=P(R_{\mu,j})
\end{split}
\end{equation}
This proves part \textit{(a)} of the theorem.

\textit{(b)} We now show that in PDM the sojourn time $\hat{T}$ until the next successful event has the same probability density function as the sojourn time $\tau$ in RRM.
From the previous part of the proof we know that the reaction in PDM is only successful with probability $\hat{P}(\hat{S})$ (Eq.\ref{eq:Proof_P(S)}), thus we discard the reaction with probability $1-\hat{P}(\hat{S})$. Let us introduce the discrete random variable $\xi\in\{1,2,\dots\}$ that, in a given state $\underbar{X}(t)$ of the system, counts how many times the PDM algorithm has to repeat the Decision step in one iteration, until a reaction can be executed. In this case, the reaction is discarded $\xi-1$ times until the next realized events.

$\xi$ is of a geometric distribution with parameter $\hat{P}(\hat{S})$, the probability of repeating the Decision step $k$ times in one iteration of the algorithm is:
{\small
\begin{equation}
\label{eq:P(xi=k)}
    \hat{P}(\xi=k)=(1-\hat{P}(\hat{S}))^{k-1}\hat{P}(\hat{S})
\end{equation}
}
In every trial $l\in\{1,\dots,k\}$ the algorithm generates a random sojourn time $\tau_l\sim Exp(\hat{a})$. 
Given that there were $k$ trials, and since $\hat{\tau}_1,\dots \hat{\tau}_k$ are independent and exponentially distributed with parameter $\hat{a}$, the sum of these soujurn times $\hat{T}=\sum_{l=1}^k\hat{\tau}_l$ is a continuous random variable of Erlang-$k$ distrubution. Its probability density function can be calculated with convolution resulting $f(t,k,\hat{a})$ and its cumulative distribution function is $F(t,k,\hat{a})$ with $ t, \hat{a}>0,$ and $ k\in\{1,2,\dots\}$ \cite{StatDist}:
{\small
\begin{equation}
\nonumber
    f(t,k,\hat{a})=\frac{\hat{a}^k\cdot  t^{k-1}}{(k-1)!}\cdot e^{-\hat{a}t} \hspace{1cm}
    F(t,k,\hat{a})=
    1-e^{-\hat{a}t}\sum_{l=0}^{k-1}\frac{1}{l!}(\hat{a}t)^l.
\end{equation}
}
\noindent We may obtain a formula for the joint probability $\hat{P}(\hat{T}<t,\xi)$ by conditioning on $\xi$:
{\small
\begin{equation}\nonumber
  \hat{P}\left(\sum_{l=1}^k\hat{\tau}_l<t,\xi\!=\!k\right)=
  \hat{P}\left(\sum_{l=1}^k\hat{\tau}_l<t|\xi\!=\!k\right)\cdot\hat{P}(\xi\!=\!k)=
  \left(1-e^{-\hat{a}t}\sum_{l=0}^{k-1}\frac{1}{l!}(\hat{a}t)^l\right)
  \cdot(1-\hat{P}(\hat{S}))^{k-1}\hat{P}(\hat{S}).
\end{equation}
}
The marginal distribution $\hat{P}(\hat{T}<t)$ can be derived with summation for $k$:
{\small
\begin{align}
\nonumber
  \hat{P}(\hat{T}<t)
  &=\sum_{k=1}^{\infty}\left(1-e^{-\hat{a}t}\sum_{l=0}^{k-1}\frac{1}{l!}(\hat{a}t)^l\right)
  \cdot(1-\hat{P}(\hat{S}))^{k-1}\hat{P}(\hat{S})\\\nonumber
  &=\sum_{k=1}^{\infty}(1-\hat{P}(\hat{S}))^{k-1}\hat{P}(\hat{S})-\sum_{k=1}^{\infty}e^{-\hat{a}t}\sum_{l=0}^{k-1}\frac{1}{l!}(\hat{a}t)^l(1-\hat{P}(\hat{S}))^{k-1}\hat{P}(\hat{S})
  .
\end{align}}
Where the first sum is trivially 1. To obtain the second sum, we rearrange the terms and change the summation index $l$ to $m=l+1$:
{\small
\begin{align}
\nonumber
  \hat{P}(\hat{T}<t)
  &=1-e^{-\hat{a}t}\sum_{k=1}^{\infty}\sum_{m=1}^{k}\frac{(\hat{a}t)^{m-1}}{(m-1)!}(1-\hat{P}(\hat{S}))^{k-1}\hat{P}(\hat{S})
  \shorteqnote{change summation order}
  \\\nonumber
  &=1-\hat{P}(\hat{S})e^{-\hat{a}t}\sum_{m=1}^{\infty}\frac{(\hat{a}t)^{m-1}}{(m-1)!}\sum_{k=m}^{\infty}(1-\hat{P}(\hat{S}))^{k-1} \shorteqnote{let $n=k-m$}\\\nonumber
  &=1-\hat{P}(\hat{S})e^{-\hat{a}t}\sum_{m=1}^{\infty}\frac{(\hat{a}t)^{m-1}}{(m-1)!}\sum_{n=0}^{\infty}(1-\hat{P}(\hat{S}))^{n+m-1} \\\nonumber
  &=1-\hat{P}(\hat{S})e^{-\hat{a}t}\sum_{m=1}^{\infty}\frac{(\hat{a}t(1-\hat{P}(\hat{S})))}{(m-1)!}^{m-1}\sum_{n=0}^{\infty}(1-\hat{P}(\hat{S}))^{n}
  \\\nonumber
  &=1-e^{-\hat{a}t}e^{\hat{a}t(1-\hat{P}(\hat{S}))}=1-e^{-\hat{a}\hat{P}(\hat{S})t},
\end{align}}
\noindent since $\sum_{n=0}^{\infty}(1-\hat{P}(\hat{S}))^{n}=1/\hat{P}(\hat{S})$.
With Eq. (\ref{eq:Proof_aS}), we have $\hat{P}(\hat{T}<t)=1-e^{-at}$, which is the cumulative distribution function of the exponential distribution with parameter $a$. Hence, we proved that $\hat{T}\sim Exp(a)$ with parameter $a$ defined in the RRM algorithm (Eq. \ref{eq:rr_a}).
\end{proof}

\begin{corollary}[On the interchangeability of the algorithms]\label{cor:change}
The immediate consequence of the theorem is that in any state of the system, with $\hat{a}, a>0$, after the last step of the currently running algorithm (which is step 8 for PDM and step 6 for RRM), the running algorithm can be interchanged with the other one.
\end{corollary}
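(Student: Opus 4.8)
The plan is to read one full iteration of either algorithm as a stochastic kernel that sends the current configuration $\underbar{X}(t)$ (together with whatever bookkeeping the algorithm carries) to the pair (new configuration $\underbar{X}(t+\Delta t)$, elapsed time $\Delta t$), and then to show that the PDM kernel and the RRM kernel, evaluated at the state in question, coincide. Since the process produced by each algorithm is a time-homogeneous Markov jump process whose transitions are governed exactly by this one-iteration kernel, equality of the two kernels at $\underbar{X}(t)$ is enough: the conditional law of the entire future trajectory depends only on the next state and the elapsed time, and both are then distributed identically regardless of which algorithm performed this step, so one may substitute the other algorithm here (and, by the same reasoning, switch back at any later iteration boundary).

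First I would note that, after step 8 of the PDM or step 6 of the RRM, the algorithm holds a completely specified configuration: the positions $\underbar{C}$ and count $N$ of the cells, the amounts $\underbar{S}$ of the chemical species, and hence the class counts $N_0,\dots,N_4$. This is precisely the state $\underbar{X}(t)$, and the data structures $(\underbar{X}(t),L)$ used by the PDM and $(\underbar{D},\underbar{B},L)$ used by the RRM are merely two encodings of the same information, inter-convertible in linear time. The hypotheses $\hat a>0$ and $a>0$ guarantee that neither algorithm halts, so a fresh iteration begins in both from the same $\underbar{X}(t)$ — which for the PDM means $\hat T=0$, exactly the starting assumption of Theorem~\ref{thm:stateq}.

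Next I would decompose the outcome of a single iteration into: (1) the elapsed time until the next realized reaction; (2) the reaction index $\mu$ and the class $j$ of the cell it acts on; (3) the identity of the target cell within class $j$; (4) for a movement or division, the free neighbour onto which the cell moves or places its daughter (and, symmetrically for a contact-promoted reaction, the occupied neighbour inspected in the decision step, which leaves the state unchanged); and (5) the deterministic state update these choices induce. Theorem~\ref{thm:stateq}(b) makes $\hat T$ and $\tau$ both $\mathrm{Exp}(a)$, so (1) agrees; Theorem~\ref{thm:stateq}(a) gives $\hat P\big((\hat R_\mu\cap\hat E_j)\mid\hat S\big)=P(R_{\mu,j})$, so the joint law of $(\mu,j)$ in (2) agrees — in the PDM this is precisely the law of the successfully executed reaction, i.e.\ the one conditioned on $\hat S$, which is the event that ends the iteration. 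For (3) both algorithms choose the target cell uniformly among the $N_j$ class-$j$ cells (in the PDM, a uniform pick over all $N$ cells, conditioned on landing in class $j$, is still uniform within that class), and for (4) both pick the relevant neighbour uniformly; hence the conditional laws given $(\mu,j)$ match, and therefore so does the induced update in (5). The time variable is independent of the discrete choices in both algorithms, so the full one-iteration kernels coincide, and the Markov property closes the argument.

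The step needing the most care is (3)–(4): Theorem~\ref{thm:stateq} is stated only for the pair $(\mu,j)$, whereas interchangeability demands that the entire post-iteration configuration have the same law, so one must verify that, conditioned on $(\mu,j)$, the residual randomness — which concrete class-$j$ cell acts, and which of its neighbours is used — is sampled identically by the two schemes. This reduces to observing that the PDM's "uniform over the whole population, then rejection on the neighbourhood" procedure, once conditioned on the accepted outcome being an $\hat R_\mu$ acting on a class-$j$ cell, is exactly "uniform over class-$j$ cells, then uniform over a free (resp.\ occupied) neighbour" — the very primitive the RRM uses outright. The remaining obligation is the purely mechanical check that the update rules for $L$, $\underbar{X}(t)$, $\underbar{D}$ and $\underbar{B}$ are literal transcriptions of the state-space updates of Section~\ref{sec:reactions}, so that identical choices yield identical successor states; this is routine bookkeeping and carries no probabilistic content.
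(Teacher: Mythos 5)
Your proposal is correct, and it is in fact more detailed than what the paper offers: the paper states Corollary~\ref{cor:change} with no proof at all, treating it as an ``immediate consequence'' of Theorem~\ref{thm:stateq}. Your route makes precise why that is legitimate, and it does so by supplying exactly the pieces the theorem does not literally contain. Theorem~\ref{thm:stateq} only matches the law of the sojourn time and the joint law of the pair $(\mu,j)$; interchangeability requires equality of the full one-iteration transition kernel, i.e.\ also of the conditional law, given $(\mu,j)$, of which concrete class-$j$ cell acts and which neighbour is used, together with the independence of the elapsed time from these discrete choices. Your observation that the PDM's ``uniform over the whole population, then accept/reject on a uniformly chosen neighbour'' collapses, once conditioned on an accepted $\hat R_\mu$ acting on a class-$j$ cell, to ``uniform within class $j$, then uniform over a free (resp.\ occupied) neighbour'' is the right symmetry argument (each class-$j$ cell has the same acceptance probability $j/4$ or $(4-j)/4$, so conditioning preserves uniformity), and the independence of $\hat T$ from the accepted outcome follows from the i.i.d.\ structure of the trials, which you could state a touch more explicitly. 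The kernel-plus-Markov framing then correctly yields that the algorithms can be swapped at any iteration boundary, and indeed swapped back later, which is the content of the corollary. In short: the paper buys brevity by appealing to the theorem; your argument buys rigor by closing the gap between ``same $(\tau,\mu,j)$ law'' and ``same distribution of the successor state,'' which is the step a careful reader would want spelled out.
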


\subsection{Running time}
We may expect that different machine times are required due to the different formulation of the algorithms to execute a full iteration in a given state of the system.
The key point between the PDM and the RRM is that the PDM may repeat Steps 3-6 several times according to the model and the state of the system. We shall examine this problem in this section.

\subsubsection*{The number of trials in the PDM algorithm}

First, we obtain an exact formula for the expected number of trials $m_\xi$ (the number of repeating the Decision step between two consecutive, realized reactions) in the PDM algorithm.

\begin{proposition}\label{proposition1}
In the PDM algorithm, given that the system is at state $\underbar{X}(t)$, during one iteration the expected number of trials is:  

\begin{equation}
\label{eq:mk}
    m_\xi=\frac{\sum_{\mu=1}^Mr_\mu N}{\frac{1}{4}\sum_{\nu=1}^\alpha r_\nu\sum_{j=0}^4jN_j+\sum_{\nu=\alpha+1}^{\alpha+\beta}r_\nu N+\frac{1}{4}\sum_{\nu=\alpha+\beta+1}^{M}r_\nu\sum_{j=0}^4(4-j)N_j}.
\end{equation}
Where $N$ is the number of cells in the population and $N_j$ is the number of cells with $j$ free neighbors in state $\underbar{X}(t)$.
\end{proposition}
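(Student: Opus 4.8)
The plan is to recognize $\xi$ as a geometrically distributed random variable — which was already established in the proof of Theorem~\ref{thm:stateq}(b) — and simply compute its expectation. Recall from Eq.~(\ref{eq:P(xi=k)}) that $\hat{P}(\xi=k)=(1-\hat{P}(\hat{S}))^{k-1}\hat{P}(\hat{S})$, so $\xi\sim\mathrm{Geom}(\hat{P}(\hat{S}))$ and therefore $m_\xi=\mathbb{E}[\xi]=1/\hat{P}(\hat{S})$. This is the entire conceptual content; the rest is bookkeeping.

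The remaining work is to substitute the known expression for $\hat{P}(\hat{S})$. First I would invoke Eq.~(\ref{eq:Proof_aS}) from the proof of the theorem, which states $\hat{a}\hat{P}(\hat{S})=a$, where $a$ is the sum of the RRM propensities as written out in Eq.~(\ref{eq:rr_a}). Hence $\hat{P}(\hat{S})=a/\hat{a}$ and consequently $m_\xi=\hat{a}/a$. Finally I would substitute $\hat{a}=N\sum_{\mu=1}^M r_\mu$ from Eq.~(\ref{eq:prop_ex}) for the numerator and the fully expanded form of $a$ (the three-term expression in the denominator of Eq.~(\ref{eq:mk})) for the denominator, which yields exactly the claimed formula.

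Alternatively — if one prefers a self-contained argument not leaning on the intermediate identity (\ref{eq:Proof_aS}) — I would re-derive $\hat{P}(\hat{S})$ directly from Eq.~(\ref{eq:Proof_P(S)}) together with the formulas $\hat{P}(\hat{F})=\frac{1}{4N}\sum_{j=0}^4 jN_j$ and $\hat{P}(\hat{\overline{F}})=\frac{1}{4N}\sum_{j=0}^4(4-j)N_j$ and $\hat{a}_\mu=r_\mu N$, obtaining $\hat{P}(\hat{S})=\bigl(\tfrac14\sum_{\nu=1}^\alpha r_\nu\sum_j jN_j+\sum_{\nu=\alpha+1}^{\alpha+\beta}r_\nu N+\tfrac14\sum_{\nu=\alpha+\beta+1}^{M}r_\nu\sum_j(4-j)N_j\bigr)\big/\bigl(N\sum_\mu r_\mu\bigr)$, and then take the reciprocal.

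There is essentially no obstacle here: the only subtle point is that $m_\xi$ is finite and well-defined precisely because the hypothesis $a>0$ (inherited from the setting of Theorem~\ref{thm:stateq}, and equivalent to $\hat{P}(\hat{S})>0$ since $\hat{a}>0$ whenever $N>0$) guarantees the geometric distribution is non-degenerate. I would state this observation explicitly so the formula in Eq.~(\ref{eq:mk}) is never a division by zero.
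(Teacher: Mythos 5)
Your proposal is correct and matches the paper's own argument: the paper likewise observes that $\xi$ is geometric with parameter $\hat{P}(\hat{S})$ (from the proof of Theorem~\ref{thm:stateq}), takes $m_\xi=(\hat{P}(\hat{S}))^{-1}$, and simplifies using the expression for $\hat{P}(\hat{S})$ to obtain Eq.~(\ref{eq:mk}). Your explicit remark about $a>0$ guaranteeing a non-degenerate geometric distribution is a small, welcome addition the paper leaves implicit.
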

\begin{proof}
As shown in the proof of the equivalence, the number of trials during one iteration $\xi$ is of geometric distribution with parameter $\hat{P}(\hat{S})$.
Thus, the expected number of trials is $m_\xi=(\hat{P}(\hat{S}))^{-1}$. After simplification, we arrive at the relationship of the proposition.
\end{proof}

\begin{proposition}\label{prop:2}
In the PDM algorithm, consider a model in which the capacity $K$ of the lattice is finite, i. e., there can be at most $K$ cells in the population.\\ 
If $N\rightarrow K$, then $$m_\xi\rightarrow \frac{\sum_{\mu=1}^M r_\mu}{\sum_{\nu=\alpha+1}^{M}r_\nu}.$$
\end{proposition}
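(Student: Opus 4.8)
The plan is to start from the closed form for $m_\xi$ established in Proposition~\ref{proposition1} and to divide both the numerator and the denominator by $N$, rewriting
\[
m_\xi=\frac{\sum_{\mu=1}^M r_\mu}{\dfrac{1}{4N}\sum_{\nu=1}^\alpha r_\nu\sum_{j=0}^4 jN_j+\sum_{\nu=\alpha+1}^{\alpha+\beta}r_\nu+\dfrac{1}{4N}\sum_{\nu=\alpha+\beta+1}^{M}r_\nu\sum_{j=0}^4(4-j)N_j}.
\]
Once in this form, the whole statement reduces to understanding the behaviour of the two averaged quantities $\tfrac{1}{4N}\sum_{j=0}^4 jN_j$ and $\tfrac{1}{4N}\sum_{j=0}^4(4-j)N_j$ as the lattice saturates, i.e.\ as $N\to K$ with $K$ fixed.

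First I would record the elementary identity $\sum_{j=0}^4(4-j)N_j=4N-\sum_{j=0}^4 jN_j$ (using $\sum_{j=0}^4 N_j=N$), which links the contact-promoted term to the contact-inhibited one and reduces everything to controlling the single quantity $S=\sum_{j=0}^4 jN_j$. Then I would supply the key combinatorial bound: $S$ counts the incidences between cells and their free adjacent sites, equivalently the pairs (occupied site, adjacent empty site); since each of the $K-N$ empty sites has at most $\eta=4$ neighbours, one gets $0\le S\le 4(K-N)$, and this bound is uniform over every lattice configuration with $N$ cells. Dividing by $N$ yields $0\le \tfrac{S}{4N}\le \tfrac{K-N}{N}$, hence $\tfrac{S}{4N}\to 0$ as $N\to K$, and therefore $\tfrac{1}{4N}\sum_{j=0}^4(4-j)N_j=1-\tfrac{S}{4N}\to 1$. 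Plugging these two limits into the displayed expression kills the first term of the denominator and turns the coefficient of $\sum_{\nu=\alpha+\beta+1}^{M}r_\nu$ into $1$, leaving $m_\xi\to \sum_{\mu=1}^M r_\mu\big/\big(\sum_{\nu=\alpha+1}^{\alpha+\beta}r_\nu+\sum_{\nu=\alpha+\beta+1}^{M}r_\nu\big)=\sum_{\mu=1}^M r_\mu\big/\sum_{\nu=\alpha+1}^{M}r_\nu$, which is the asserted limit.

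The one genuinely delicate point — and the place I would be most careful — is the meaning of ``$N\to K$'' for an integer-valued $N$. The clean resolution is precisely the uniform bound $S\le 4(K-N)$: it shows the contact-inhibited contribution is squeezed to $0$ no matter which packing realises a given $N$, so no hypothesis on the geometry of the configuration is needed and the limit statement is unambiguous. I would also remark, for completeness, that in the degenerate case $\sum_{\nu=\alpha+1}^{M}r_\nu=0$ (only contact-inhibited reactions present) the same computation gives $m_\xi\to+\infty$, which is consistent with the stated formula read in the extended reals.
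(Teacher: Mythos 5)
Your proposal is correct and follows essentially the same route as the paper: both take the limit of the expression for $m_\xi$ from Proposition~\ref{proposition1}, showing that the contact-inhibited term vanishes while the contact-promoted term tends to its maximal value. The only difference is in justification — the paper argues loosely via $N_0\rightarrow K$ as the empty sites disappear, whereas your uniform incidence bound $\sum_{j=0}^4 jN_j\le 4(K-N)$ together with the identity $\sum_{j=0}^4(4-j)N_j=4N-\sum_{j=0}^4 jN_j$ makes the limit configuration-independent and resolves the meaning of ``$N\rightarrow K$'' for integer $N$, which is a welcome tightening rather than a different method.
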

\begin{proof}
Because of $N\rightarrow K$ for the number of empty sites $e=(K-N)$, we have $e\rightarrow 0$. Hence, for the number of cells with 0 free neighbors we have $N_0\rightarrow K$. And because of $N-N_0\rightarrow0$ we have $N-N_0=(N_1+\dots +N_4)\rightarrow0$. And since $N_0,\dots,N_4\geq0$ we have $\sum_{j=0}^4jN_j\rightarrow0K=0$ és $\sum_{j=0}^4(4-j)N_j\rightarrow 4K$.
Taking $N\rightarrow K$ in Eq. (\ref{eq:mk}) and substituting these limits we obtain the proposition.
\end{proof}

\begin{corollary_p}
In the PDM algorithm, consider a model in which the capacity $K$ of the lattice is finite, and suppose that the model only has contact-inhibited reactions ($r_\nu=0$ for $\nu\in\{\alpha+1,\dots,M\}$). Now, if $N\rightarrow K$, then $m_\xi\rightarrow \infty$.
\end{corollary_p}

In this latter case, the number of trials in the PDM algorithm, and also its running time, grows without bonds.
Naturally, this would never happen in practice, as we would stop the simulation whenever $N=K$.
Nevertheless, these propositions show that the number of trials and the simulation time may be extremely large.
In contrast, in the RRM algorithm we never face this issue, since the selected reactions are always executed.

\subsection*{Numerical comparison}
Fig. \ref{fig:runtime} shows a numerical comparison of the running times of the PDM and the mRRM algorithms.
We choose a lattice of size $K=100\cdot100$ and fixed proliferation rate $r_p=1$. Then we experimented with 3 different mobility rates: $r_m=1,0.5,0$. In case of each $r_m$ we randomly occupied $q=N(0)/K$, $q\in\{5\%,10\%,\dots,95\%\}$ of the lattice with initial cells, then run and averaged out 50 simulations.
The markers show the corresponding simulation times $\Theta_{mRRM}(r_m,q)$ for each initial setting $q$ normalized to the running time $\Theta_{PDM}(r_m,q)$ of the PDM algorithm, see the horizontal line on the figure, for easier comparison.

With our implementation and with the data structure we chose, for $r_m=1$ and a small number of initial cells the mRRM is 30\% \textit{slower} than the PDM, however with zero mobility rate the mRRM may be 30\% \textit{faster} than the PDM. In the other hand, for greater than 50\% initial occupancy the mRRM always perform better, in fact it may be up to 10 times faster than the PDM.
Therefore, depending on the implementation and data structure, mRRM can indeed provide a notable speed increase.

In some situations, e. g., in a scratch essay the number of modeled cells may affect the outcome of the simulation as the cells may engage in other reactions, such as the cellular or non-cellular (bio)chemical reactions, that are able to influence the time evolution of the population. 
Therefore, enabling larger lattices can offer extensive benefits in terms of the reliability and the biological relevance of the model.

\begin{figure}[h]
    \centering
    \includegraphics[width=.65\textwidth]{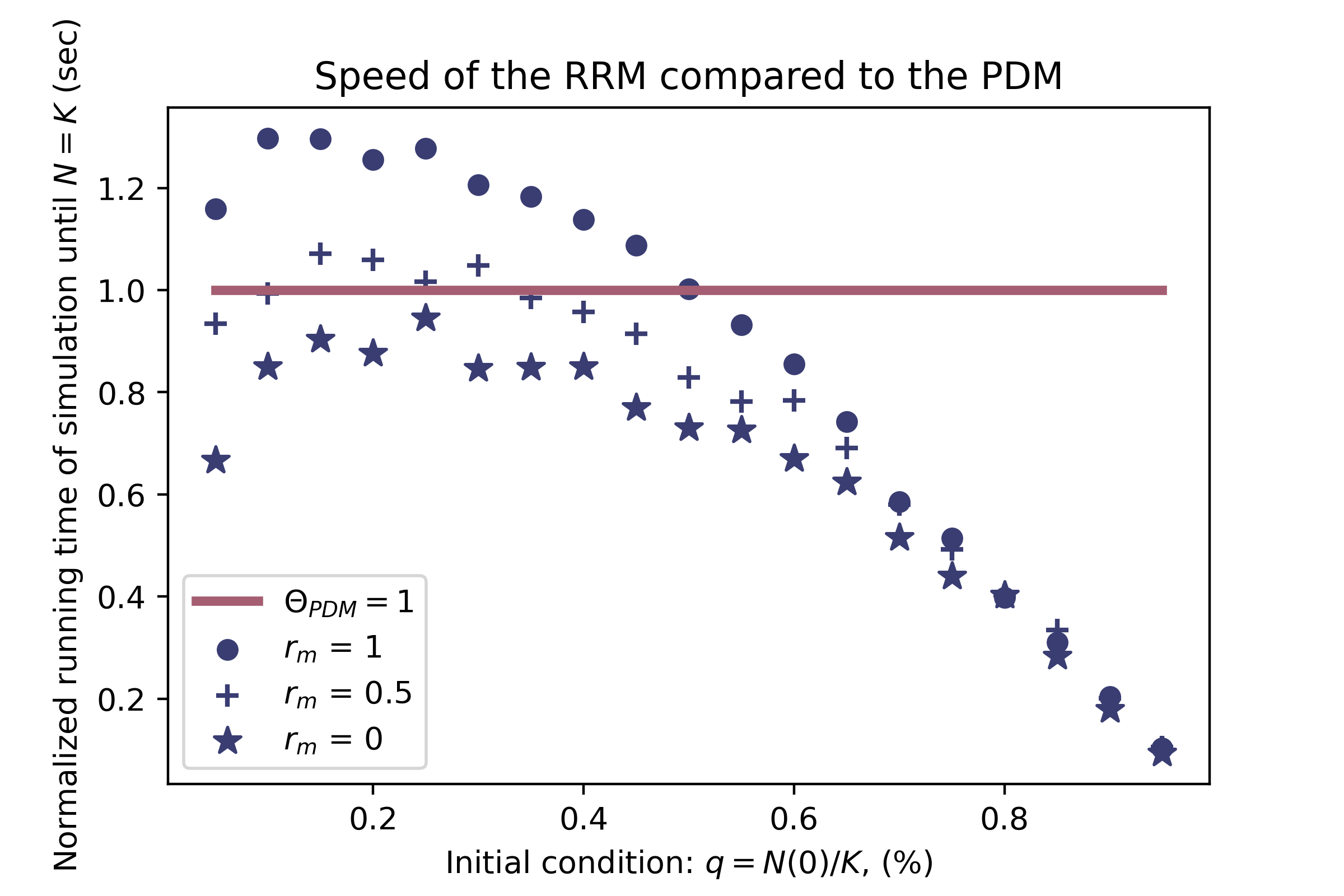}
    \caption{\textbf{Normalized running times of the mRRM algorithm compared to the PDM algorithm.} 
    In these experiments the proliferation rate $r_p=1$ is fixed and different mobility rates $r_m=1,0.5,0$ were set (corresponding to the markers). For each $r_m$ $q\in\{5,10,\dots,95\}$ percent of the lattice was occupied by randomly placed cells, then 50 simulations were run and averaged out.
    We may observe a dramatic relative speed increase of the mRRM algorithm compared to the PDM after 50\% of lattice utilization.}
    \label{fig:runtime}
\end{figure}

\subsection*{Methods to decrease running time}
In Sec. \ref{sec:gil-method} we described several resourceful methods to speed up the algorithm. 
Probably the most convenient approach at first is to only update the propensities if there were changes involving them. 
Rearranging the reaction indices and checking the rapid reactions first is also a feasible approach.
Identifying and separating fast and slow reactions or tau leaping can be easily applied to product synthetic reactions, but must be treated with care in case of motility, division or any type of death event as conflict between reactions may arise (movement of a dead cell to a non empty site, as an extreme example).

Finally, according to corollary \ref{cor:change}, the PDM and any version of the RRM can be used interchangeably. Therefore, after some pre-runs and considering Proposition \ref{proposition1}, one may identify the algorithm that performs the best at the regime of the state space under consideration.

\section{Toy models}\label{sec:toy}
In this section, we present two toy models to illustrate the newly introduced reactions. We used the mRRM algorithm in both simulations. We ran and averaged out 30 stochastic realizations with synthetic parameters. Any of the experiments can be readily repeated with the provided Python code. We present our findings without rigorous mathematical treatment.

\subsection{Contact dependent and spontaneous product synthesis}

Synthesis and secretion of biochemical substances are fundamental physiological processes in living organisms. 
Cells communicate, defend themselves, and influence each other with these processes. 
In this section, we demonstrate experiments with the three essentially different types of synthesis reactions we introduced in Sec. \ref{sec:reactions} (see Table \ref{t:reactions}). As it can be seen on Fig. \ref{fig:prod_syn}, we started to grow a cell population on a lattice with capacity $K=50\cdot50=2500$. 
The initial number of cells were $1\%$ of the capacity, that is $N(0)=25$ cells randomly placed on the lattice. We assumed that the cells engage in contact-inhibited, spontaneous and contact-promoted cellular biochemical synthesis reactions $\mu=2,3,4$, producing one unit amount of chemical substances $S_{ci},S_s,S_{cp}$, respectively. 
We chose $S_{ci}(0)=S_s(0)=S_{cp}(0)=0$ for initial conditions. Let the index of the proliferation reaction be $\mu=1$.
We assumed there are no other reactions in this system. 
We assume that these products are synthesized independently from each other, thus in $r_2,r_3,r_4$ the reactant combinations $h_2=h_3=h_4=1$, we chose $r_1=1$.
The corresponding propensities are:
$$a_1=\frac{r_1}{4}\sum_{j=0}^4jN_j,\hspace{.5cm}
a_2=\frac{r_2}{4}\sum_{j=0}^4jN_j,\hspace{.5cm}a_3=r_3N,\hspace{.5cm} a_4=\frac{r_4}{4}\sum_{j=0}^4(4-j)N_j.$$
With little calculation we can simplify the sum $a$ to: $$a=(r_3+r_4)N+\frac{r_1+r_2-r_4}{4}\sum_{j=0}^4jN_j.$$
The waiting time and the next reaction is selected according to Eq.(\ref{eq:RRm_pmf}) and the class of the proliferating cell is selected according to Eq.(\ref{eq:P(Aj|Rmu)}). 
We made the following assumptions: there is no cell death in the population, the probability of cell division is
$$p_1=\frac{\frac{r_1}{4}\sum_{j=0}^4jN_j(t)}{(r_3+r_4)N+\frac{r_1+r_2-r_4}{4}\sum_{j=0}^4jN_j}>0$$ whenever $0<N<K$, and only zero when $N=K$. Therefore, for the number of cells we expect $N\rightarrow K$ as $t\rightarrow \infty$, as it can be seen on the figure. 

\begin{figure}[h]
    \centering
    \includegraphics[width=.95\textwidth]{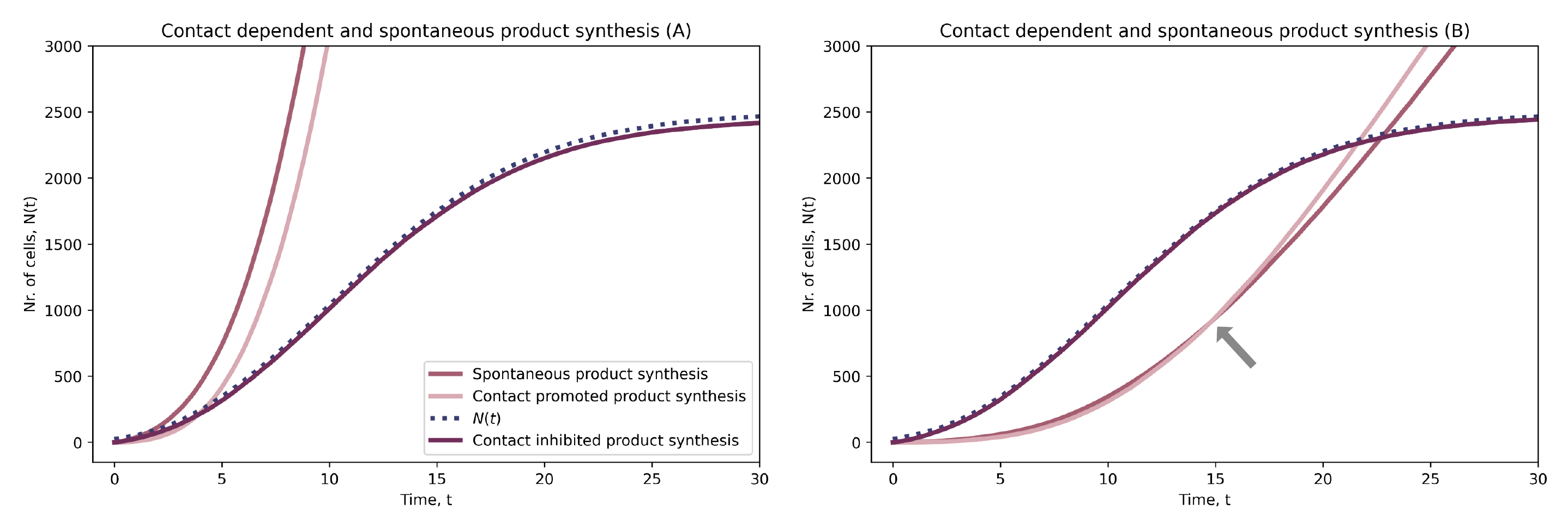}
    \caption{\textbf{The effect of contact dependent and spontaneous product synthesis}}
    \label{fig:prod_syn}
\end{figure}

We now estimate the time evolution of products $S_{ci}(t), S_{s}(t), S_{cp}(t)$.
Following the reasoning of Proposition \ref{prop:2}, it is easy to show that as $N\rightarrow K$, $p_1,p_2\rightarrow0$ and $p_3\rightarrow r_3/(r_3+r_4)$ and $p_4\rightarrow r_4/(r_3+r_4)$. Thus we expect $S_{ci}(t)$ to go into saturation after the size of the population reached the capacity of the lattice, leaving only the spontaneous and contact-promoted synthesis reactions to fire. From their limiting probabilities we may expect that the averaged curves $S_{s}(t), S_{cp}(t)$ approximate straight lines, as the same amount of product $S_s$ and $S_{cp}$ is generated on average per unit time, and the population remains constant after it reached the available capacity. 
Notice that $0<1/4\cdot\sum_{j=0}^4jN_j\leq1/4\cdot\sum_{j=0}^44N_j=N$ for all $N\in\{1,\dots,K\}$.
Thus, in case $r_2\leq r_3$, the probability of the contact-inhibited synthesis reaction is less than or equal to the probability of the spontaneous synthesis reaction. 
We may expect a similar relation between the contact-promoted reaction and the spontaneous reaction as $0\leq1/4\cdot\sum_{j=0}^4 (4-j)N_j=N-1/4\cdot\sum_{j=0}^4 jN_j\leq N$.

In the experiment of subfigure (A) of Fig.\ref{fig:prod_syn}, we chose $r_2=r_3=r_4=1$. We can see that $S_s(t)>S_{cp}(t)$ and the curves smooth to two parallell lines (since $r_2=r_3$) as we may expect based on the above. On subfigure (B), we set $r_2=1, r_3=0.085, r_4=0.1$. 
In the beginning, the amount of product synthesized in the contact-inhibited reaction is much grater than the amount of products generated by the other two reactions, due to the difference in their reaction rates. 
Since $r_3\leq r_4$ $S_s(t)$ and $S_{cp}(t)$ do not align to parallel lines for any $N$, in fact, they intersect as it is indicated in the figure.

\subsection{Contact dependent and spontaneous cell death}

In this section, using this toy model, we illustrate the effect of the three possible types of death that were introduced in Sec.\ref{sec:reactions}. 
Fig. \ref{fig:cell_death} represents the averaged output of the corresponding 30 independent simulations of six different experimental settings.
Specifically, in every single experiment we set the proliferation rate to $r_1=1$ and chose exactly one type of cell death also with rate 1. Initially $N(0)$ cells were randomly placed on the lattice.
The waiting time and the next reaction is selected according to Eq.(\ref{eq:RRm_pmf}) and the class of the target cell is selected according to Eq.(\ref{eq:P(Aj|Rmu)}).

\begin{figure}[h]
    \centering
    \includegraphics[width=.8\textwidth]{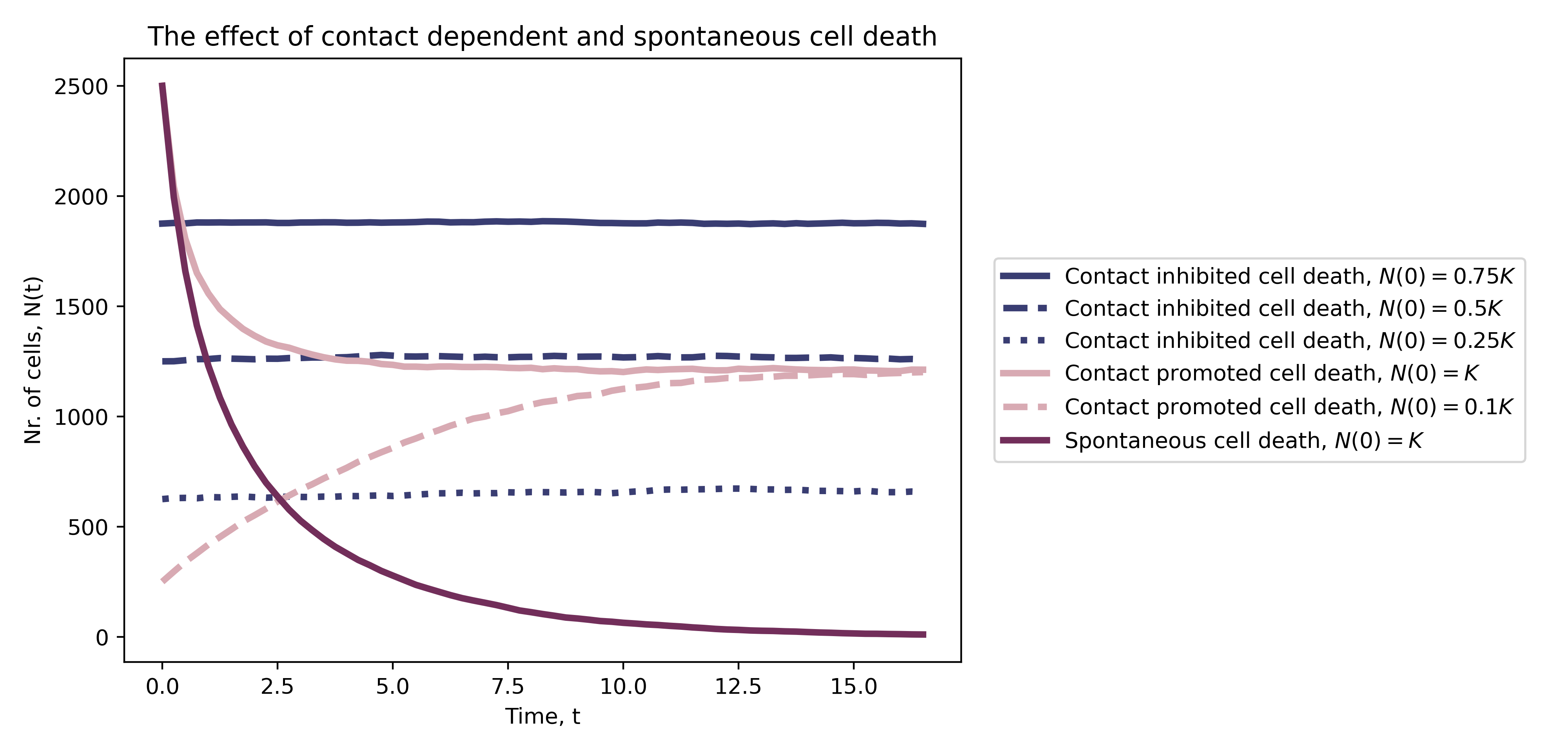}
    \caption{\textbf{The effect of different types of cell death on the population.} The figure shows the average outcome of six independent experiments after 30 runs. 
    We assumed that in all cases the capacity of the lattice was $K=50\cdot50=2500$ and the proliferation rate was $r_1=1$; each curve corresponds to one experiment.  Blue pink, and purple curves illustrate the effect of contact-inhibited, contact-promoted, and spontaneous cell deaths, respectively.}
    \label{fig:cell_death}
\end{figure}

During the first three experiments (blue curves) we assumed contact-inhibited cell death with three different initial conditions, $N(0)=0.75K, 0.5K, 0.25K$. The corresponding propensities and probabilities are:
$$a_{\text{birth}}=\frac{1}{4}\sum_{j=0}^4jN_j(t),
\hspace{.5cm}
a_{\text{death}}=\frac{1}{4}\sum_{j=0}^4jN_j(t),
\hspace{.5cm}
p_{\text{birth}}=p_{\text{death}}=\frac{1}{2}.$$
As proliferation is also a contact-inhibited event, the probabilities of proliferation and death are the same at any state $\underbar{X}(t)$. Thus, we may expect that the population can be sustained in a constant level if $N(0)$ is sufficiently large that $N(t)$ cannot drop to the absorbing state $N=0$ due to stochastic fluctuations.

In our second experiment (pink curves) we assumed contact-promoted cell death and chose $N(0)=K,0.1K$.
The corresponding propensities are:
$$a_{\text{birth}}=\frac{1}{4}\sum_{j=0}^4jN_j(t),
\hspace{.3cm}
a_{\text{death}}=\frac{1}{4}\sum_{j=0}^4(4-j)N_j(t).$$
With both initial conditions the size of the population tends to $K/2>N\approx 1205$ where the probabilities of proliferation and death equals. Notice that these curves are under the blue dashed curve, which is around $K/2$. This is due to the reflecting boundary conditions we use in the experiments (see Sec. \ref{sec:lattice_state}): any cell on the boundary can have at most 3 free adjacent sites.

In the third experiment (purple curve) we assumed spontaneous cell death. The propensities are $$a_{\text{birth}}=1/4\cdot\sum_{j=0}^4jN_j, \hspace{.5cm} a_{\text{death}}=N.$$ And since $r_1=r_2=1$ and $1/4\cdot\sum_{j=0}^4jN_j\leq N$ for $N\in\{1,\dots,K\}$ the probability of death is always greater than the probability of cell division, thus we may expect the population to die out in the long term, as it can be seen on the figure.

\section{Discussion}\label{dis}
In this article, we provided an extended version of the cell simulation algorithm introduced by Baker et al. \cite{CorrBDM} that includes new reaction types. We call this algorithm the Prompt Decision Method (PDM).

In this algorithm we generalized the idea of volume excluding reactions, cell mobility and proliferation, and defined them as reactions that can also take cell-to-cell contact into account: the reactions can be either promoted or inhibited by cellular contact. 
This is reflected by an increased or decreased rate of the corresponding reactions.
We also included spontaneous reactions that are independent from cellular contact.
Possible reaction types include but are not limited to: movement; proliferation; contact-dependent or spontaneous cell death and biochemical reactions.
All of these reactions can be defined in a custom, problem-specific manner, and the number of reactions accounted for by the model is virtually unlimited. 
However, if we limit this model to non-cellular biochemical reactions, it still reduces to Gillespie's chemical stochastic simulation algorithm in a well-stirred environment.
The PDM algorithm is a classical accept–reject montecarlo simulation algorithm that may be very computationally intensive due to the possible large number of rejections in a crowded environment.

We introduced the Reduced Rate Method (RRM) algorithm that is based on the classification of cells according to their free adjacent sites. In this algorithm, every selected reaction is executed, and therefore, the running time and the resource requirements see dramatic reductions.
Furthermore, we also showed that the PDM and the RRM methods generate the same stochastic process as the distribution of the waiting time until the next successful reaction; also, the probabilities of the next successful reaction are the same in both approaches in a given state of the system.
The RRM algorithm is an exact stochastic simulation algorithm that can be derived from Gillespie's method from its first principle. Since the RRM algorithm is equivalent to to PDM method, the PDM method is also an exact stochastic simulation method. 

We also introduced an equivalent, but simplified version of the RRM, called the marginal Reduced Rate Method (mRRM). Even though this algorithm is easier to implement, it still retains all the benefits of the RRM algorithm.
The RRM and mRRM algorithms have several advantages compared to PDM:
\begin{itemize}
    \item RRM is considerably faster in a crowded population.
    In contrast with PDM, in the RRM and mRRM algorithms the selected reaction is always executed due to the different formulation of the propensities.
    This is a particularly important feature in cell culture models, as the immotile cells or the ones that are unable to proliferate do not slow down the simulation time.
    Moreover, this 'seemingly unnecessary' population of cells may still engage in other important processes needed for the development of the whole population.
    \item Furthermore, as a result of this, the probability of the events is known in every iteration of the algorithm. Thus, giving a probabilistic interpretation of different emerging phenomena is more straightforward in a simulation done with the mRRM algorithm (see Sec. \ref{sec:toy}).
    \item Since the design of the RRM and mRRM algorithm is based on the classification of cells according to their free adjacent sites, including new classes to this method is straightforward. 
    For example, one may include several cell types, or even extend the model incorporating the fact that the chemical substances $S_1,\dots, S_\omega$ may influence the state of the cells.
    This could be achieved in the PDM method only in very inconvenient ways, if it is possible at all.
\end{itemize}

\section*{Acknowledgement}
I would like to thank dr. Tam\'as \'Arp\'adffy-Lovas for proofreading the text and significantly improving the quality of the article. I would like to thank dr. J\'anos Benke and dr. Fanni K. Ned\'enyi for having a discussion about my proof, dr. G\'abor Sz\H{u}cs for his comments about the terminologies, Norbert Bogya for his assistance with the LaTeX template and Andr\'as Lonczkor for his ideas about the implementation of the algorithms.

\end{document}